\newcommand{\off}{\color{black}}
\author{Lea Olja\v{c}a, Jochen Br\"{o}cker, Tobias Kuna\thanks{Lea Olja\v{c}a is funded by EPSRC Centre for Doctoral Training in Mathematics of Planet Earth, grant number: EP/L016613/1 and NCEO: National Centre for Earth Observations}}
\title{Almost sure error bounds for data assimilation in dissipative systems with unbounded observation noise}
\begin{document}

	\maketitle

\begin{abstract}
Data assimilation is uniquely challenging in weather forecasting due to the high dimensionality of the employed models and the nonlinearity of the governing equations. Although current operational schemes are used successfully, our understanding of their long-term error behaviour is still incomplete. In this work, we study the error of some simple data assimilation schemes in the presence of unbounded (e.g. Gaussian) noise on a wide class of dissipative dynamical systems with certain properties, including the Lorenz models and the 2D incompressible Navier-Stokes equations. We exploit the properties of the dynamics to derive analytic bounds on the long-term error for individual realisations of the noise in time. These bounds are proportional to the amplitude of the noise. Furthermore, we find that the error exhibits a form of stationary behaviour, and in particular an accumulation of error does not occur. This improves on previous results in which either the noise was bounded or the error was considered in expectation only. 
\end{abstract}

\begin{keywords}
Data Assimilation, Navier-Stokes Equations, Lorenz Systems, Error Bounds
\end{keywords}
\begin{AMS}
37H10, 62M20, 93E11, 35Q30
\end{AMS}

\section{Introduction}
\label{sec:Intro}
Data assimilation is a term used in the geophysical community to describe efforts to improve our knowledge of a system by combining incomplete observations with imperfect models~\cite{Apte2008}. Data assimilation is important in many fields of engineering and geophysical applications, and is an essential part of modern numerical weather prediction where it is used to initialise the forecasts based on observations of the atmosphere, combined with short term predictions~\cite{Kalnay2003}. In this field, data assimilation is uniquely challenging due to the infinite dimensionality and nonlinearity of the weather problem. Currently employed models use discretizations with \(\mathcal{O}(10^9)\) dimensional state vectors and \(\mathcal{O}(10^7)\) partial observations of the atmosphere per day~\cite{Bauer2015}. Furthermore, equations governing the dynamics of the atmosphere are well known to exhibit sensitive dependence on initial conditions~\cite{Lorenz1963,Kalnay2003}, meaning that determining them as accurately as possible is a key factor in increasing the length of the forecasting horizon.

Combining noisy data with uncertain models is an inverse problem whose optimal solution is necessarily probabilistic and sits naturally in a Bayesian framework~\cite{Law2012} and \cite{Kalnay2003}, Sec.~5.5. Due to the nonlinear nature of the underlying equations, deriving an explicit form for the posterior distribution is in general not possible~\cite{Stuart2010}. A sufficiently precise numerical representation (e.g.~by MCMC methods or particle filters~\cite{VanLeeuwen2009}) of the solution is very computationally expensive and not currently feasible in operational weather forecasting~\cite{Law2012}, although this is a promising area of research~\cite{VanLeeuwen2015}. Therefore, the data assimilation schemes used in practice are approximations based on exact schemes derived for linear systems with Gaussian priors and additive Gaussian noise, known as the Kalman filter~\cite{Law2015}. The schemes are applied to the nonlinear dynamics sequentially with various further simplifications, the simplest of which is to assume constant prior covariance. This is known as the 3DVAR method \cite{Kalnay2003}, Sec 5.5. A more advance method, the ensemble Kalman filter, involves an evolving prior covariance, estimated through the use of ensembles, that is, several simultaneous runs of the data assimilation cycle using a set of perturbed observations \cite{Evensen}. Although clearly used with great success~\cite{Bauer2015}, these are nonetheless ad hoc approximations, and a satisfactory understanding of their fundamental properties is still lacking.

A rigorous study of data assimilation in the context of the full primitive equations (a reasonable model of atmospheric circulation~\cite{Vallis}) is currently out of scope. There has been extensive study of a simpler but still infinite dimensional model; the 2D viscous, incompressible Navier-Stokes (N-S) equations. Other models typically studied in the context of data assimilation in geophysical applications (see e.g.~\cite{Lorenz1996,Law2014,Law2016,Sanz-Alonso2014}) are the Lorenz~'63 and Lorenz~'96 models, as they exhibit many of the properties of the \mbox{N-S} equations such as being dissipative with a quadratic and energy conserving nonlinearity, while having the advantage of being finite dimensional. Fortunately some remarkable properties of the 2D \mbox{N-S} equations have been known for some time. It was first shown by C.~Foias and G.~Prodi in 1967~\cite{Foias1967} that the solution is completely determined by the temporal evolution of some finite number of spatial Fourier modes, which have since been named the ``determining modes". Subsequent work~\cite{Foias1984,Jones1992} showed that this also holds for a finite set of appropriately chosen nodal values. 

More recent work re-frames these results in the context of data assimilation~\cite{Olson2003,Hayden2011}, and shows that certain data assimilation schemes have zero asymptotic error even with only finitely rank observations. Hayden, Olson and Titi~\cite{Hayden2011} consider the Lorenz~'63 and \mbox{N-S} equations with a data assimilation scheme where noiseless observations are directly replaced into the approximating solution at discrete times. Their result shows that for a sufficiently large number of observed low modes, the higher modes synchronise, that is, the error goes to zero with the number of assimilation cycles.

In~\cite{Brett2013}, Brett et al build on the results in~\cite{Hayden2011} by allowing for observational errors and using the 3DVAR algorithm. They show that for bounded observational errors, the asymptotic (\(t \to \infty\)) error between the approximating solution and the true state of the atmosphere is bounded, and of the same order of magnitude as the bound on the noise. The same result is obtained in \cite{Foias2016} for another type of data assimilation scheme, which is related to the once widely used 'nudging' schemes. Therefore, in both papers, the overall error is driven by the error in the observations, regardless of initial error. Furthermore, this result is obtained pointwise, that is, it is true for any realization of the noise. The stochastic properties of the observational errors however, do not enter into the derivation of the bound, except the boundedness, which is essential.

In~\cite{Law2014,Law2016}, results are obtained in expectation for unbounded noise for the Lorenz~'96 and~'63 models, respectively. They show that for the 3DVAR scheme, the mean square of the error is of the same order of magnitude as the variance of the noise. In~\cite{Sanz-Alonso2014}, Sanz-Alonso and Stuart extend this result, in expectation, to a wide class of dissipative PDEs, including infinite dimensional systems, that satisfy certain properties; the ``absorbing ball" property and the ``squeezing property''. As is noted in~\cite{Brett2013}, in a remark after Assumption 3.1, there is essentially a trade-off to be made between having bounded noise, with pointwise bounds, and unbounded noise, where similar techniques lead to results in expectation.

The main objective of the present paper is to investigate whether data assimilation into certain dissipative systems of PDEs is well behaved. Our approach is based on the works of~\cite{Hayden2011},~\cite{Brett2013} and~\cite{Sanz-Alonso2014}. In those publications, results regarding data assimilation accuracy with unbounded noise are given in expectation, while in the present paper we derive (almost surely) pointwise bounds, even for unbounded noise. More specifically, we prove that for large time, the error is bounded by a finite and stationary process, and give an explicit description of this process in terms of the observation noise. Technically, there are realisations of the noise for which this bound fails, but these have zero probability, and hence are statistically irrelevant. 

We use the \off simple replacement data assimilation scheme as studied by Titi et al in~\cite{Hayden2011} although we expect our result to be extendible to 3DVAR type algorithms as described in \cite{Brett2013}\off. We require assumptions similar to the absorbing and squeezing properties of~\cite{Sanz-Alonso2014} but with some crucial differences. We allow the squeezing function to be random, and require only that its expectation is less than one. We are then able to apply Birkhoff's Ergodic Theorem to show that the squeezing function is sufficiently often less than one to give us a bound which is pointwise finite (\cref{th:Main,th:Main2}). The result holds for any strength of the noise, given by the variance \(\sigma^2\), and furthermore, the bound decreases as the variance of the noise is decreased. Therefore the data assimilation error (for large time) is at least proportional to the strength of the noise. As in~\cite{Sanz-Alonso2014}, we test our assumptions on two finite dimensional systems; Lorenz~'63 and~'96, before turning to the infinite dimensional \mbox{N-S} system. 

The paper is organised as follows. In \cref{Sec:SetUp} we describe the dynamical system framework, the data assimilation scheme, and the assumptions we require on the observation error. Observations at each data assimilation time are assumed to contain a random error, the nature of which we keep as general as possible. In particular, we do not require i.i.d.\ or bounded noise, just that the noise is stationary and ergodic. In \cref{sec:Assump}, we set out general assumptions on the dynamical systems needed for our main result, \cref{th:Main}, the theorem itself and the proof. In \cref{Sec:Apriori}, we investigate the properties of an apriori bound we derive for the dissipative systems considered in this paper. In \cref{sec:Finite} we show that our assumptions are satisfied by a large class of finite dimensional dissipative systems provided they satisfy certain properties. We discuss the Lorenz~'63 and~'96 models as examples of such systems. In \cref{sec:NS}, we prove that the N-S equations satisfy the Assumptions of \cref{th:Main} as well. 

\section{The data assimilation problem}
\label{Sec:SetUp}

\subsection{Dissipative dynamical system}

Informally, we think of an equation as being ``dissipative" if all solutions are eventually bounded and this bound is uniform for any initial condition. Formally, a semigroup is dissipative if it possesses a compact absorbing set \cite{Robinson2001}.

Let \(\mathbf{H}\) be a Hilbert space with  \(| \ . \ |\) the induced norm. Let \(U\) be the solution of a dissipative system with initial conditions \(U_0\) at \(t_0\) and let \(\psi\) be the continuous semi-flow defined by 
\begin{equation}\label{eq:Sol}
U(t) = \psi(t,t_0,U_0), 
\end{equation}
where \[\psi(t+s,t_0,U_0) = \psi(t,s,\psi(s, t_0,U_0)), \ \ \ \ \text{(the semigroup property)},\] and \[\psi(0,t, U(t))=U(t) \] for all real \(t \geq 0\), such that \(\psi\) is continuous in \(t\) and with respect to initial condition \(U_0\).  

We assume that this dynamical system is a perfect representation of the real world system we are interested in; for instance the atmosphere, and we refer to $ U $ as the ``reference" solution.
 
\subsection{Data assimilation} 
As mentioned in the introduction, we will be using a simple data assimilation method as defined \off by Titi et al \off in \cite{Hayden2011} but with noise added at each discrete data assimilation time.

Let \(\mathbf{O}_P\), the observation \textit{space}, be a finite dimensional subspace of \(\mathbf{H}\) and \(P\) the orthogonal projection onto \(\mathbf{O}_P\).

An observation at time \(t_n\) is given by \(PU(t_n) + \sigma R_n\), where \(\sigma R_n\) is the noise, or random error, in the observation. We will define \(R_n\) more precisely in \cref{sec:Observations}. We assume that \(R_n\) is a random variable with values in \(\mathbf{O}_P\) so that \(PR_n = R_n\).

\off
We note that the observations as defined above are restricted to being finite in number and the observations space is restricted to a linear transformation of the model space. In weather prediction however, this is often not the case; the observation operator can be highly non-linear, as for example, in the case of satellite observations. Restricting to a linear observation operator also means that the additive nature of the noise is preserved.
\off

The approximating solution of the discrete data assimilation scheme that we use is obtained as follows. Initially at \(t_0 = 0\) we have, \[\bar{u}_0 = \eta + PU_0+ \sigma R_0,\] where \(\eta\) is the initial guess of the unobserved part of the solution. Then at discrete times \(0 < t_1 < t_2 < ...\)  we set
\begin{equation}\label{eq:Approx}
\bar{u}_{n} = Q\psi(t_{n},t_{n-1},\bar{u}_{n-1}) + PU(t_{n})+ \sigma R_{n},
\end{equation}
 where  $ Q=I-P $ is the projection onto unobserved space. 

At intermediate times \(t_n \leq t < t_{n+1}\), the approximating solution \(u(t)\) is a continuous in time function defined by 
\begin{equation}\label{eq:Approx_Sol}
u(t) = \psi(t,t_n,\bar{u}_n) \ \text{for} \ t \in [t_n,t_{n+1}).
\end{equation}
We note that \(u\) is continuous on each interval $ [t_n,t_{n+1}) $ but has discontinuities at \({t_n, n \in \mathbb{N}}\), with \(u\) continuous from the right and with limits to the left, since
\[u(t_n^+) = \lim_{t \to t_n^+}\psi(t,t_n,\bar{u}_n) = \bar{u}_n = u(t_n),\] while 
\[u(t_n^-) = \lim_{t \to t_n^-}\psi(t,t_{n-1},\bar{u}_{n-1}) = \psi(t_n,t_{n-1},\bar{u}_{n-1}) \neq \bar{u}_n.\]
We are interested in the data assimilation error \(\delta(t)\), which is the difference between the reference and approximating solutions described above. In particular, we are interested in the asymptotic behaviour as \(t \to \infty\). Like the approximating solution, \(\delta(t)\) is piece-wise continuous in time and defined by 
\begin{equation}\label{eq:error}
\delta(t) = U(t) - u(t) =\psi(t,t_0,U_0) - \psi(t,t_n,\bar{u}_n)
\end{equation}
  in the interval \([t_n,t_{n+1})\). At \(t_n\) we have \[\delta_n := \delta(t_n) = U(t_n) - \bar{u}_n =Q\psi(t_{n},t_0,U_0)-Q\psi(t_{n},t_{n-1},\bar{u}_{n-1}) - \sigma R_{n}. \] 
  For simplicity, we assume that the time between observational updates (the data assimilation interval), \[h = t_{n+1}-t_n >0 \] is constant.

\subsection{Observations}\label{sec:Observations}
As we will be considering the asymptotic data assimilation error, we will be looking at a sequence of noise realisation that extends into infinite time and in fact it will be useful to extend it backward in time also. 

 Let \((\Omega,\mathscr{F},\mathbb{P})\) be a probability space and \(T:\Omega \rightarrow \Omega\) a measure preserving map such that \(T\) and \(T^{-1}\) are ergodic with respect to $ \mathbb{P} $. Let  \(R:\Omega \rightarrow \mathbf{O}_P\) be a random variable on \((\Omega,\mathscr{F})\) and denote \(R_n = R \circ T^n\); a sequence of random variables, with \(n \in \mathbb{Z}\).  \(R_n\) will serve to model the noise in the observations at time \(t_n\).
We let \[\bar{R}:(\Omega, \mathscr{F}) \to (\mathbf{O}_P^{\infty},\mathscr{B}_{\infty}) \] be given by \[\omega \to (..R_{-1}(\omega), R_{0}(\omega),R_{1}(\omega)...).\] This is a measurable map and represents a realisation of the noise for all time, extending to infinite past and future. We denote the probability distribution of \(\bar{R}\) by \(P_{\bar{R}}\).

We note that with \(T\) measure preserving, \(R_n\) is a strictly stationary sequence (see e.g.~\cite{Breiman1992}, Proposition 6.9.\ for proof). We further assume that \(\mathbb{E}(R)\) = 0 and \(\mathbb{E}(|R|^2)=1\) and we model the random noise in our observation at time \(t_n\) as \(\sigma R_n\), where \(\sigma \in \mathbb{R^+}\). Therefore \(\sigma^2\) is the variance of the observation noise. If \(R\) were to have non-zero mean, this would represent a systematic error. 

As an example, suppose that the \(R_n\) are i.i.d random variables with \(T: \mathbf{O}_P^{\infty} \rightarrow \mathbf{O}_P^{\infty} \) being the shift map defined by \((T^k(\bar{r}))_n = r_{n+k}\) for \(\bar{r} \in \mathbf{O}_P^{\infty}\). Then the distribution \(P_{\bar{R}}\) of \(\bar{R}\) is the product probability and  \((\mathbf{O}_P^{\infty},\mathscr{B}_{\infty},P_{\bar{R}})\) is the canonical probability model\footnote{Since the distribution of the process contains all the information we are interested in, we have discarded the original process on \(\Omega\) and have represented it in term of the coordinate representation process instead on \(\mathbf{O}_P^{\infty}\). }. It can be shown that \(T\) is measure preserving and \(T\), \(T^{-1}\) are ergodic. The proof is similar to the Kolmogorov zero-one law~\cite{Breiman1992}, Theorem 3.12.
\section{Assumptions and main result}
\label{sec:Assump}
In this section we state the main assumptions that we will need in order to prove our main result, \cref{th:Main}. Assumption~\ref{as:one} requires the existence of an absorbing ball which is natural to dissipative systems. Assumption~\ref{as:two} can often be deduced from the same estimates that give us Assumption~\ref{as:one}, as is demonstrated in \cref{lemma:apriori}, and is an a priori bound on the error dynamics. Assumptions~\ref{as:three} and~\ref{as:four} are generally more difficult to prove, particularly Assumption~\ref{as:four} in the presence of unbounded random error. They represent a kind of contraction or squeezing on the unobserved part of the dynamics.\\

\newtheorem{myAs}{Assumption}
\newtheorem{myPrp}{Property}

\begin{myAs}\label{as:one}
\textbf{(Absorbing ball property)} There exists \(K > 0\), depending on the dynamical system, such that the ball \(\mathscr{B} = \{U; |U|^2\leq K\}\) is absorbing and forward invariant.
\end{myAs} 
\begin{myAs}\label{as:two} \textbf{(A priori bound)} 
For all \(\sigma, h >0\), there exists a measurable function \(\rho_0:\mathbb{R^+}\times \mathbb{R^+} \times \Omega \to \mathbb{R^+}\) with \[|\delta_n|^2 \leq  \rho_0(h,\sigma) \circ T^n(\omega):=\rho_n\] such that \(\rho_n\) is a continuous monotone increasing function of \(\sigma\).
\end{myAs} 

\begin{myAs} \label{as:three}
There exist continuous functions \(M, \gamma:(\mathbb{R^{+}}, \mathbb{R^{+}}) \rightarrow \mathbb{R^{+}}\) such that whenever \(U \in \mathscr{B}\) and \(|U-V|\leq \rho\), 
\[|Q\{\psi(t+\tau, t, U)-\psi(t+\tau, t, V)\}|^2 \leq M(\tau, \rho)|Q(U-V)|^2 + \gamma(\tau, \rho)|P(U-V)|^2.\] 
\end{myAs} 

\textbf{Remark}: Without loss of generality we can assume that \(M\) and \(\gamma\) are not decreasing in \(\rho\) because we can always replace \(M, \gamma\) by functions that are larger and not decreasing. 
\begin{myAs} \label{as:four}
With \(\rho_0\) as in Assumption~\ref{as:two} and \(M(\tau,\rho)\) and \(\gamma(\tau, \rho)\) as in Assumption~\ref{as:three}; for every \(\sigma>0\) there exists an \(h > 0\), such that \[\mathbb{E}M(h,\rho_0(h,\sigma)) < 1, \] and \[\mathbb{E}\gamma(h,\rho_0(h,\sigma)) < \infty.\]
\end{myAs} 

\textbf{Remark}: We note that for any measurable function \(f:\mathbb{R} \to \mathbb{R}\), the process \(f \circ \rho_n \) is stationary and ergodic, since \(T\) is assumed to be measure preserving and ergodic. 

In particular, we can write,
\begin{equation}
M_n(\tau):=M (\tau, \rho_n) = M_0(\tau, \rho_0) \circ T^n(\omega), \label{eq:MT} 
\end{equation}
and 
\begin{equation}
\gamma_n(\tau):=\gamma (\tau, \rho_n) = \gamma_0(\tau,\rho_0) \circ T^n(\omega). \label{eq:GammaT} 
\end{equation}

\label{sec:MainResult}
We now state the main result of the paper.
\begin{theorem}\label{th:Main} Suppose Assumptions~\ref{as:one} to~\ref{as:four} hold. Let $ \sigma^*>0 $ and take \(h>0\) as in Assumption~\ref{as:four} with \(\sigma^*\) instead of \(\sigma\). Then there exists a stationary and a.s.\ finite process \(C_n\), a \off non-negative \off constant \(\bar{\beta} < 1\) and a random variable \(D\), such that for all \(\sigma < \sigma^*\), the error \(\delta_n = U(t_n) - u(t_n)\) satisfies
\begin{equation}
|\delta_n|^2 \leq \sigma^2C_n+D\bar{\beta}^n|QU_0-\eta|^2,
\end{equation}
almost surely. In particular,
\begin{equation} \label{eq:MainEq}
\limsup_{n}\Big(|\delta_n|^2 - \sigma^2 C_n \Big)\leq 0,
\end{equation}
a.s., where \(C_n\), \(\bar{\beta}\) and \(D\) are given in the proof by Equations \eqref{eq:middle_term}, \eqref{eq:C_n} and~\eqref{eq:B_explicit}.
In particular, \(C_n, \bar{\beta} \) and \(D\) only depend on \(\sigma^*\).
\end{theorem}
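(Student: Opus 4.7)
The plan is to derive a stochastic linear recursion for $|Q\delta_n|^2$, iterate it, and bound the two resulting terms using Birkhoff's ergodic theorem. First, I would decompose $\delta_n$ into its projections. From the assimilation update \eqref{eq:Approx} one has $P\delta_n = PU(t_n) - P\bar u_n = -\sigma R_n$ and $Q\delta_n = Q\{\psi(t_n, t_{n-1}, U(t_{n-1})) - \psi(t_n, t_{n-1}, \bar u_{n-1})\}$, using the semigroup property. Since $U(t_{n-1}) \in \mathscr{B}$ by Assumption~\ref{as:one} and $|\delta_{n-1}|^2 \leq \rho_{n-1}$ by Assumption~\ref{as:two}, Assumption~\ref{as:three} applies; using $|P\delta_{n-1}|^2 = \sigma^2|R_{n-1}|^2$ this yields the scalar recursion
\[
|Q\delta_n|^2 \leq M_{n-1}|Q\delta_{n-1}|^2 + \sigma^2\gamma_{n-1}|R_{n-1}|^2.
\]
The remark after Assumption~\ref{as:three}, together with monotonicity of $\rho_0$ in $\sigma$ from Assumption~\ref{as:two}, allows me to replace $\rho_{n-1}(h,\sigma)$ by $\rho_{n-1}(h,\sigma^*)$ throughout, so $M_n,\gamma_n$ depend only on $\sigma^*$ and the bound is uniform in $\sigma\leq\sigma^*$.

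Iterating from $n=0$ using $Q\delta_0 = QU_0 - \eta$ gives
\[
|Q\delta_n|^2 \leq \Bigl(\prod_{k=0}^{n-1}M_k\Bigr)\,|QU_0 - \eta|^2 \;+\; \sigma^2 \sum_{j=0}^{n-1}\Bigl(\prod_{k=j+1}^{n-1}M_k\Bigr)\gamma_j|R_j|^2.
\]
For the initial-error term, $(\log M_n)$ is stationary and ergodic under $T$, and $\mathbb{E}\log M_0 \leq \log \mathbb{E}M_0 < 0$ by Jensen's inequality combined with Assumption~\ref{as:four}. Birkhoff's theorem then gives $n^{-1}\log\prod_{k=0}^{n-1}M_k \to \mathbb{E}\log M_0$ almost surely, so for any $\bar\beta \in (\exp(\mathbb{E}\log M_0),\,1)$ the random variable $D := \sup_{n\geq 0}\bar\beta^{-n}\prod_{k=0}^{n-1}M_k$ is a.s.\ finite. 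This produces the $D\bar\beta^n|QU_0 - \eta|^2$ contribution in the theorem.

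For the accumulated-noise term I would extend the sum backward to $-\infty$ and set
\[
C_n := |R_n|^2 + \sum_{j=-\infty}^{n-1}\Bigl(\prod_{k=j+1}^{n-1}M_k\Bigr)\gamma_j|R_j|^2,
\]
the first $|R_n|^2$ absorbing $|P\delta_n|^2 = \sigma^2|R_n|^2$ when assembling $|\delta_n|^2 = |Q\delta_n|^2 + |P\delta_n|^2$. Stationarity, $C_{n+1}(\omega) = C_n(T\omega)$, follows by reindexing using $M_k\circ T = M_{k+1}$ and the analogous identities for $\gamma_k$ and $R_k$. It then suffices to check $C_0 < \infty$ a.s.\ pathwise: Birkhoff applied to the ergodic map $T^{-1}$ shows that $\prod_{k=-l+1}^{-1}M_k$ decays exponentially in $l$ at rate $\mathbb{E}\log M_0 < 0$, while the estimate $\mathbb{E}\log^+(\gamma_0|R_0|^2) \leq \mathbb{E}\gamma_0 + \mathbb{E}|R_0|^2 < \infty$ (from Assumption~\ref{as:four} and the normalization $\mathbb{E}|R|^2 = 1$) together with Borel--Cantelli applied to the stationary sequence $(\gamma_{-l}|R_{-l}|^2)_{l\geq 1}$ shows that these factors grow only sub-exponentially along the trajectory. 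The product of the two therefore decays geometrically, and the series is a.s.\ summable.

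The main obstacle is exactly this a.s.\ finiteness of $C_n$. The terms of the defining series are coupled through the common transformation $T$, so one cannot separate the products $\prod M_k$ from the amplitudes $\gamma_j|R_j|^2$ by independence, and moment bounds like $\mathbb{E}\prod M_k$ are not controlled by the assumptions. The resolution has to be pathwise, combining the quantitative Birkhoff-type exponential decay of the $M$-products with only the logarithmic integrability of $\gamma_0|R_0|^2$ that is available. Once this is done, putting the pieces together gives $|\delta_n|^2 \leq D\bar\beta^n|QU_0 - \eta|^2 + \sigma^2 C_n$ almost surely, and \eqref{eq:MainEq} follows since $\bar\beta\in(0,1)$ and $D$ is a.s.\ finite.
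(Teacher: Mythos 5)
Your proposal follows the paper's own route almost step for step: the same $P$/$Q$ decomposition giving $|P\delta_n|^2=\sigma^2|R_n|^2$ and the recursion $|Q\delta_n|^2\leq M_{n-1}|Q\delta_{n-1}|^2+\sigma^2\gamma_{n-1}|R_{n-1}|^2$ (the paper's \cref{Lemma:Meq}), the same replacement of $\sigma$ by $\sigma^*$ via monotonicity, the same Birkhoff--Jensen control of $\prod_{k=0}^{n-1}M_k$ (the paper's \cref{Lemma:Mprod}), and the same backward extension of the noise sum to produce a stationary majorant $C_n$. The one place where you genuinely diverge is the a.s.\ finiteness of the infinite series defining $C_n$: the paper (\cref{Lemma:Mterm2}) dominates the products by $C_{\omega,\xi}\bar\beta^{l}$ and then shows the resulting geometric series has finite expectation via monotone convergence, which requires $\mathbb{E}\bigl(\gamma_0|R_0|^2\bigr)<\infty$; you instead argue pathwise, combining the exponential decay of the $M$-products with sub-exponential growth of $\gamma_{-l}|R_{-l}|^2$ obtained from $\mathbb{E}\log^+\bigl(\gamma_0|R_0|^2\bigr)\leq\mathbb{E}\gamma_0+\mathbb{E}|R_0|^2<\infty$ and the first Borel--Cantelli lemma (which needs no independence). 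Your version is sound and in fact asks for slightly less than the paper's: Assumption~\ref{as:four} together with $\mathbb{E}|R|^2=1$ gives the $\log^+$-moment directly, whereas finiteness of $\mathbb{E}\bigl(\gamma_0|R_0|^2\bigr)$ does not follow from the stated assumptions alone, since $\gamma_0=\gamma(h,\rho_0)$ and $R_0$ are in general dependent. The only technicality you gloss over is that $\log M_0$ may fail to be integrable from below; as in the paper, one truncates $M_0$ away from zero (or invokes the version of the ergodic theorem for functions bounded above) before applying Birkhoff, and the conclusion that $D$ is a.s.\ finite is unaffected.
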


\Cref{th:Main} shows that, for almost all realisations of the noise, at any data assimilation update time \(t_n\), the error \(\delta_n\) is bounded. In addition, asymptotically for large time, the bound is given by \(\sigma^2C_n\) which constitutes a stationary process so that its distribution is time independent. Furthermore as \(\sigma \to 0\) the bound decreases to zero like \(\sigma^2\).

To get a bound for intermediate times \(t \in (t_n, t_{n+1})\), we require a further assumption.
\begin{myAs}\label{as:five}
There exists a constant \(\kappa>0\) such that \(|\delta(t)|^2 \leq e^{\kappa(t-t_n)}|\delta_n|^2\) for \(t \in [t_n,t_{n+1}).\)
\end{myAs}

We can easily see that if Assumption~\ref{as:five} holds, then the following modified version of \cref{th:Main} follows.

\begin{theorem}\label{th:Main2} Suppose Assumptions~\ref{as:one} to~\ref{as:five} hold. Let $ \sigma^*>0 $ and take \(h>0\) as in Assumption~\ref{as:four} with \(\sigma^*\) instead of \(\sigma\). Then there exists a stationary and a.s.\ finite process \(C_n\), a \off non-negative \off constant \(\bar{\beta} < 1\) and a random variable \(D\), such that for all \(\sigma < \sigma^*\), the error \off \(\delta(t) = U(t) - u(t)\) with \(t \in [t_n, t_{n+1}):=I\)\off \ satisfies
\begin{equation}
|\delta(t)|^2 \leq (\sigma^2C_n+D\bar{\beta}^n|QU_0-\eta|^2)e^{\kappa h},
\end{equation}
almost surely. In particular, 
\off
\begin{equation} 
\limsup_{n}\Big[\sup_{t\in I}\Big(|\delta(t)|^2 - e^{\kappa h}\sigma^2 C_n \Big)\Big]\leq 0,
\end{equation}
\off 
a.s., where \(C_n\), \(\bar{\beta}\) and \(D\) are given in the proof by Equations \eqref{eq:middle_term}, \eqref{eq:C_n} and~\eqref{eq:B_explicit}.
In particular, \(C_n, \bar{\beta}\) and \(D\) only depend on \(\sigma^*\).
\end{theorem}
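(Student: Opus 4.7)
The plan is to deduce \cref{th:Main2} as a direct corollary of \cref{th:Main} and Assumption~\ref{as:five}. The idea is that Assumption~\ref{as:five} provides a uniform Gr\"onwall-type continuity estimate on each interval \([t_n, t_{n+1})\), so that the discrete-time bound established in \cref{th:Main} at the assimilation times propagates to intermediate times with only a multiplicative factor of \(e^{\kappa h}\).

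First I would fix \(n\) and \(t \in [t_n, t_{n+1}) =: I\). By Assumption~\ref{as:five},
\[
|\delta(t)|^2 \leq e^{\kappa(t-t_n)}|\delta_n|^2 \leq e^{\kappa h}|\delta_n|^2,
\]
since \(t - t_n < h\). Applying the pointwise bound from \cref{th:Main} (which holds under Assumptions~\ref{as:one}--\ref{as:four} with the same \(\sigma^*\) and \(h\)) to \(|\delta_n|^2\) immediately gives
\[
|\delta(t)|^2 \leq e^{\kappa h}\bigl(\sigma^2 C_n + D\bar{\beta}^n|QU_0-\eta|^2\bigr)
\]
almost surely, which is the first claim. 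Note that \(C_n\), \(\bar{\beta}\), and \(D\) are inherited unchanged from \cref{th:Main}, so their dependence on \(\sigma^*\) alone is preserved.

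For the asymptotic statement, I would take the supremum over \(t \in I\) on both sides after subtracting \(e^{\kappa h}\sigma^2 C_n\), obtaining
\[
\sup_{t \in I}\bigl(|\delta(t)|^2 - e^{\kappa h}\sigma^2 C_n\bigr) \leq e^{\kappa h}\bigl(|\delta_n|^2 - \sigma^2 C_n\bigr).
\]
Taking \(\limsup_n\) and using \(\limsup_n(|\delta_n|^2 - \sigma^2 C_n) \leq 0\) from \cref{th:Main}, together with the nonnegativity of \(e^{\kappa h}\), yields the desired estimate almost surely.

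There is no real obstacle here: the entire content of \cref{th:Main2} rests on the discrete-time result, and Assumption~\ref{as:five} is precisely tailored to bridge the gap between assimilation times. The only minor points to check are that the exceptional null set on which \cref{th:Main} might fail is the same one on which \cref{th:Main2} fails (it is, since Assumption~\ref{as:five} holds deterministically), and that the supremum over \(I\) is well-defined, which follows from the piecewise continuity of \(\delta(t)\) on \(I\) noted after \eqref{eq:Approx_Sol}.
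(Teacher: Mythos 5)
Your proposal is correct and is exactly the argument the paper intends: the paper gives no separate proof of \cref{th:Main2}, stating only that it follows easily from \cref{th:Main} once Assumption~\ref{as:five} supplies the factor \(e^{\kappa h}\) bridging intermediate times, which is precisely your deduction.
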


\begin{comment}
\begin{theorem}\label{th:Main2}
Under Assumptions~\ref{as:one} to~\ref{as:five}, there exists a stationary, a.s.\ finite process \(C_n\) such that \(\delta(t) = U(t)-u(t)\) satisfies \[\limsup_{n \to \infty}\Big(|\delta(t)|^2 - e^{\kappa h}\sigma^2 C_n \Big)\leq 0,\] with \(C_n\) as given in \cref{th:Main}.
\end{theorem}
\end{comment}

Before turning to the proof of the main result, we require some lemmas.
\begin{lemma} \label{Lemma:Meq}
Under Assumptions~\ref{as:one} to \ref{as:three}, \(\delta_n=U(t_n)-u(t_n)\) satisfies
\begin{equation}\label{eq:Meq}
|\delta_{n}|^2 \leq \sigma^2\sum_{l=1}^{n} \prod_{k=l}^{n-1}M_{k}|R_{l-1}|^2\gamma_{l-1} +\prod_{k=0}^{n-1}M_{k}|QU_0-\eta|^2 + \sigma^2|R_{n}|^2, 
\end{equation}
where \(M_k:= M(h,\rho_k(h))\) and \(h=t_{n+1} - t_n\) is the update interval.
\end{lemma}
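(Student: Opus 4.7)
The plan is to derive a one-step recursion for $|Q\delta_n|^2$, unroll it, and combine with the explicit expression for $|P\delta_n|^2$.

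First I would decompose $\delta_n = P\delta_n + Q\delta_n$. Using the semigroup property to write $U(t_n) = \psi(t_n, t_{n-1}, U(t_{n-1}))$ and the definitions \eqref{eq:Approx}--\eqref{eq:error}, one obtains
\[
\delta_n = Q\bigl[\psi(t_n, t_{n-1}, U(t_{n-1})) - \psi(t_n, t_{n-1}, \bar{u}_{n-1})\bigr] - \sigma R_n.
\]
Since $PR_n = R_n$ and $PQ = 0$, this splits cleanly as
\[
P\delta_n = -\sigma R_n, \qquad Q\delta_n = Q\bigl[\psi(t_n, t_{n-1}, U(t_{n-1})) - \psi(t_n, t_{n-1}, \bar{u}_{n-1})\bigr],
\]
so that $|\delta_n|^2 = |Q\delta_n|^2 + \sigma^2 |R_n|^2$.

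Next I would apply \cref{as:three} to $Q\delta_n$ with $U = U(t_{n-1}) \in \mathscr{B}$ (provided by \cref{as:one}, assuming that the reference orbit lies in the absorbing set, which may be assumed without loss of generality when studying asymptotics), $V = \bar{u}_{n-1}$, and $\tau = h$, using the a priori bound $|\delta_{n-1}|^2 \leq \rho_{n-1}$ from \cref{as:two} together with the monotonicity of $M$ and $\gamma$ in the $\rho$-argument. This gives
\[
|Q\delta_n|^2 \leq M_{n-1}|Q\delta_{n-1}|^2 + \gamma_{n-1}|P\delta_{n-1}|^2 = M_{n-1}|Q\delta_{n-1}|^2 + \sigma^2 \gamma_{n-1}|R_{n-1}|^2,
\]
where the last equality uses the identity $P\delta_{n-1} = -\sigma R_{n-1}$ from the decomposition above.

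Then I would unroll the recursion by induction on $n$, yielding
\[
|Q\delta_n|^2 \leq \prod_{k=0}^{n-1} M_k \cdot |Q\delta_0|^2 + \sigma^2 \sum_{l=1}^{n} \prod_{k=l}^{n-1} M_k \cdot \gamma_{l-1}|R_{l-1}|^2,
\]
with the convention $\prod_{k=n}^{n-1} := 1$. The initial term simplifies because $\bar{u}_0 = \eta + PU_0 + \sigma R_0$ with $\eta$ in the unobserved subspace, so that $Q\bar{u}_0 = \eta$ and hence $|Q\delta_0|^2 = |QU_0 - \eta|^2$. Adding back the contribution $|P\delta_n|^2 = \sigma^2 |R_n|^2$ recovers \eqref{eq:Meq}.

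The calculation is essentially linear algebra plus one invocation of \cref{as:three} per step, so there is no substantial obstacle. The only structural point worth emphasising is the identity $P\delta_n = -\sigma R_n$: at each update time the observed modes of the error are reset to (minus) the realisation of the observation noise, so only the unobserved modes evolve nontrivially between updates, and this is why a single squeezing estimate on the $Q$-component suffices to close the recursion.
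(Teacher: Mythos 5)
Your proposal is correct and follows essentially the same route as the paper: split $\delta_n$ into $P\delta_n=-\sigma R_n$ and $Q\delta_n$, apply \cref{as:three} once per step (with the a priori bound of \cref{as:two} and forward invariance from \cref{as:one}), and unroll the resulting recursion. The only cosmetic difference is that you evaluate \cref{as:three} directly at $\tau=h$ with $U=U(t_{n-1})$, $V=\bar u_{n-1}$, whereas the paper bounds $|Q\delta(t)|^2$ on $[t_n,t_{n+1})$ and passes to the limit $t\to t_{n+1}^-$ using continuity of $Q\delta(t)$; both yield the same recursion.
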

\begin{proof}
 By Assumption~\ref{as:one} we have that the solution \(U(t) \in \mathscr{B} \) for some \(t>0\). Without loss of generality we can assume that \(U(t_0) \in \mathscr{B} \). Then, \(U(t_n) \in \mathscr{B} \), by the forward invariance of \(\mathscr{B}\). Furthermore, by Assumption~\ref{as:two},  for any \(h>0\), we have a stationary process \(\rho_n\) such that \(|\delta_n|^2 \leq \rho_n\) for all \(n \in \mathbb{N}\). Therefore we can apply Assumption~\ref{as:three} at each update time \(t_n\). Let  \(t \in [t_n,t_{n+1})\), \(U = U(t_n), V=u(t_n)\), and \(M_n(\tau)\)(respectively $ \gamma_n(\tau) $) be as in Equation~\cref{eq:MT} (respectively Eq.~\cref{eq:GammaT}) where \(\tau = t-t_n \in [0, h)\). We obtain
\off
\begin{align*}
|Q \delta_{n+1}|^2 & = \lim_{t \to t_{n+1}} |Q \delta(t)|^2  \\
& \leq \lim_{t \to t_{n+1}} M_n(t-t_n)|Q\delta_{n}|^2 + \sigma^2\gamma_n( t-t_n)|R_{n}|^2 \\
& = M_n(h)|Q\delta_{n}|^2+ \sigma^2\gamma_n( h)|R_{n}|^2 ,
\end{align*}
\off
where we have used the continuity of \off \(Q\delta(t)\) \off at \(t_{n+1}\). Write \(M_n: = M_n(h)\) and \(\gamma_n: = \gamma_n(h)\) for simplicity.
By induction on the above,
\iffalse
\begin{align*}
|Q\delta_{n}|^2 & \leq M_{n-1}|Q\delta_{n-1}|^2 + \sigma^2|R_{n-1}|^2 \gamma_{n-1}\\
 & \leq M_{n-1}M_{n-2}|Q\delta_{n-2}|^2 + M_{n-1}\sigma^2|R_{n-2}|^2\gamma_{n-2}+ \sigma^2 |R_{n-1}|^2\gamma_{n-1} \\
 & ...\\
 & \leq \prod_{k=0}^{n-1}M_{k}|Q\delta_{0}|^2 +\sigma^2\prod_{k=1}^{n-1}M_{k}|R_{0}|^2\gamma_0+ ...+\sigma^2\prod_{k=n-1}^{n-1}M_{k}|R_{n-2}|^2\gamma_{n-2}+\sigma^2|R_{n-1}|^2\gamma_{n-1}\\
 & = \sigma^2\sum_{l=1}^{n-1} \prod_{k=l}^{n-1}M_{k}|R_{l-1}|^2\gamma_{l-1} +\prod_{k=0}^{n-1}M_{k}|QU_0-\eta|^2 + \sigma^2|R_{n-1}|^2\gamma_{n-1},
\end{align*}
since \(|Q\delta_0|^2 = |QU_0- \eta|^2\).

Finally,
\begin{align*}
|\delta_n|^2 &=|Q\delta_n|^2+|P\delta_n|^2\\  &\leq \sigma^2\sum_{l=1}^{n-1} \prod_{k=l}^{n-1}M_{k}|R_{l-1}|^2\gamma_{l-1} +\prod_{k=0}^{n-1}M_{k}|QU_0-\eta|^2 + \sigma^2\gamma_{n-1}|R_{n-1}|^2 + \sigma^2|R_{n}|^2
\end{align*}
as required.
\fi
\begin{equation*}
|Q\delta_{n}|^2 \leq \sigma^2\sum_{l=1}^{n} \prod_{k=l}^{n-1}M_{k}|R_{l-1}|^2\gamma_{l-1} +\prod_{k=0}^{n-1}M_{k}|QU_0-\eta|^2,
\end{equation*}
since \(|Q\delta_0|^2 = |QU_0- \eta|^2\) and we define \(\prod_{k=n}^{n-1}M_{k}=1\).

Finally, using that \(|P\delta_n|^2 = \sigma^2|R_n|^2\),
\begin{align*}
|\delta_n|^2 &=|Q\delta_n|^2+|P\delta_n|^2,\\  &\leq \sigma^2\sum_{l=1}^{n} \prod_{k=l}^{n-1}M_{k}|R_{l-1}|^2\gamma_{l-1} +\prod_{k=0}^{n-1}M_{k}|QU_0-\eta|^2 + \sigma^2|R_{n}|^2,
\end{align*}
as required. \end{proof}
To obtain a meaningful bound as stated in \cref{th:Main}, we need that the RHS of estimate~\cref{eq:Meq} is almost surely finite in the long term. This would clearly be the case if \(M_k\) would be less than one, for all \(k\) (with some conditions on \(\gamma_n\)). Unfortunately, since the a priori bound is stochastic, the \(M_k\) are also stochastic and it is not, in general, possible to guarantee that \(M_k< 1\) for all \(k\), whatever the value of \(h\). However, we are able to use the Ergodic Theorem to show that if \(\mathbb{E}(M_k)<1\), it ensures \(M_k<1\) often enough to guarantee that estimate~\cref{eq:Meq} is almost surely finite. That is, for almost all realizations of the sequence \(\{M_k\}_k\), the proportion of \(M_k<1\) is sufficient to ensure that the product is less than 1.
\begin{lemma} \label{Lemma:Mprod}
For any real \(\xi >0\), there exist almost surely finite random variables \(C_{\omega, \xi}\) and \( C_{\omega, \xi}^{'}\), such that for all \(N>0\)
\begin{equation}\label{eq:Mprodpos}
\prod_{k=0}^{N-1} M_{-k} \leq C_{\omega, \xi}(\beta + \xi)^N,
\end{equation}
\begin{equation}\label{eq:Mprodneg}
\prod_{k=0}^{N-1} M_{k} \leq C_{\omega, \xi}^{'}(\beta + \xi)^N,
\end{equation}
where
\begin{equation}\label{eq:C}
C_{\omega, \xi}:=\max_{N} \frac{\prod_{k=0}^{N-1} M_{-k}}{(\beta + \xi)^N},
\end{equation} 
\begin{equation}\label{eq:C'}
C_{\omega, \xi}^{'}:=\max_{N} \frac{\prod_{k=0}^{N-1} M_{k}}{(\beta + \xi)^N},
\end{equation}
where \(\{M_k\}\) is as in \cref{Lemma:Meq} and 
\(\beta = \mathbb{E}(M_k)\).
\end{lemma}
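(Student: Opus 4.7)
The plan is to reduce the product to a sum by taking logarithms and then apply Birkhoff's Ergodic Theorem to the stationary ergodic sequence $\{\log M_k\}$. Taking logarithms in \eqref{eq:C'}, the claim becomes equivalent to showing that
$$
\sum_{k=0}^{N-1}\log M_k - N\log(\beta+\xi) \longrightarrow -\infty \quad \text{as } N \to \infty
$$
almost surely, since once this holds the ratio $\prod_{k=0}^{N-1} M_k/(\beta+\xi)^N$ converges to $0$, so its maximum over $N \geq 1$ is attained at some finite $N(\omega)$ and is therefore a.s.\ finite.

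For Birkhoff's theorem to apply to $f = \log M_0$ I first need integrability of the positive part, which follows from the elementary estimate $(\log x)^+ \leq x$ for $x > 0$ together with $\mathbb{E}(M_0) = \beta < \infty$ from Assumption~\ref{as:four}. Since $\{M_k\}$ is stationary and ergodic by \eqref{eq:MT}, the ergodic theorem then yields
$$
\frac{1}{N}\sum_{k=0}^{N-1}\log M_k \longrightarrow \mathbb{E}(\log M_0) \quad \text{a.s.},
$$
where the limit is allowed to be $-\infty$. Jensen's inequality provides the decisive strict comparison $\mathbb{E}(\log M_0) \leq \log \beta < \log(\beta+\xi)$, so the ergodic averages of $\log M_k - \log(\beta+\xi)$ converge a.s.\ to a strictly negative value (or to $-\infty$) and the partial sums diverge to $-\infty$, which establishes \eqref{eq:Mprodneg}. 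The backward estimate \eqref{eq:Mprodpos} follows by repeating the argument verbatim with $T$ replaced by $T^{-1}$; this is precisely where I invoke the hypothesis from \cref{sec:Observations} that $T^{-1}$ is also measure preserving and ergodic, which ensures both the applicability of Birkhoff and the identity $\mathbb{E}(M_{-k}) = \beta$ so that the threshold $\beta + \xi$ is unchanged.

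The only point of genuine care is the possibility that $\mathbb{E}(\log M_0) = -\infty$, which is not excluded by Assumption~\ref{as:four}. Birkhoff's theorem still gives convergence of the averages to $-\infty$ once $(\log M_0)^+$ is integrable, so the conclusion is unaffected; the argument simply needs to be phrased to cover this case rather than tacitly assuming $\log M_0 \in L^1$. Aside from this bookkeeping, the proof is an essentially direct application of the ergodic theorem, and the slack $\xi > 0$ introduced in the statement is exactly what is needed to turn the weak inequality from Jensen into the strict inequality that drives the divergence.
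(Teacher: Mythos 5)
Your proposal is correct and follows essentially the same route as the paper: apply Birkhoff's Ergodic Theorem to $\log M_k$ (and to the time-reversed sequence via $T^{-1}$), use Jensen's inequality to compare $\mathbb{E}(\log M_0)$ with $\log\beta$, and exploit the slack $\xi>0$ to conclude that the ratio $\prod_{k}M_{k}/(\beta+\xi)^N$ is eventually bounded, so its maximum is a.s.\ finite. The only cosmetic difference is how the possible non-integrability of $\log M_0$ from below is handled --- you invoke the one-sided version of Birkhoff requiring only $(\log M_0)^+\in L^1$, while the paper truncates via $\max(\epsilon,M_0)$ --- and these are interchangeable.
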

\begin{proof}
Assuming \(\log{M_0(\omega)}\) is measurable we can apply the Ergodic Theorem~\cite{Walters, Breiman1992} to \(T^{-1}\) to obtain
\begin{align}\label{eq:ergodic_eq}
\lim_{n \to \infty} \frac{1}{n}\sum_{k=0}^{n-1} \log{M_{-k}(\omega)} & = \lim_{n \to \infty} \frac{1}{n}\sum_{k=0}^{n-1} \log{M_0(\omega) \circ T^{-k}(\omega)} \nonumber \\ & = \mathbb{E}(\log{M_0(\omega)}) \nonumber \\ & \leq \log{\mathbb{E}(M_0(\omega)),}
\end{align}
where the last inequality follows from Jensen's Inequality.

We note that we did not require that  \(\log{M_0(\omega)}\) is integrable as we can apply the Ergodic Theorem to random variables that are either bounded below or above. In the present case,  \(\log{M_{0}(h,\omega)}\) could be unbounded below but we may replace it with \(\bar{M_0}(h,\omega)=\max(\epsilon,M_0(h,\omega))\) for some small \(\epsilon > 0\) and apply the Ergodic Theorem to \(\log{\bar{M_0}(h,\omega)}\).

Let \(\beta =\mathbb{E}(M_k)\). From Inequality~\cref{eq:ergodic_eq} we have that for a.e.\ \(\omega\), for all \(\xi > 0,\) there exists \(\ N_{\omega,\xi}\) such that for all \(n\geq N_{\omega,\xi}\),
\[\frac{1}{n}\sum_{k=0}^{n-1} \log{M_{-k}} \leq \ln(\beta + \xi),\] and hence
\[ \prod_{k=0}^{n-1} M_{-k} \leq (\beta + \xi)^n.\] This implies
\begin{equation}\label{eq:Mprodfinite}
\frac{\prod_{k=0}^{n-1} M_{-k}}{(\beta + \xi)^n} \leq 1.
\end{equation}

Next, we note that for all \(N>0\) it holds that
\begin{align*}
\prod_{k=0}^{N-1} M_{-k} & = \frac{\prod_{k=0}^{N-1} M_{-k}}{(\beta + \xi)^N} (\beta + \xi)^N\\
 & \leq C_{\omega, \xi} (\beta + \xi)^N,
\end{align*}
where \[C_{\omega, \xi}:=\max_{N} \frac{\prod_{k=0}^{N-1} M_{-k}}{(\beta + \xi)^N}. \] 
 
 \(C_{\omega, \xi}\) is finite for a.e.\ \(\omega\) since by Inequality~\cref{eq:Mprodfinite} it is less than 1 for large enough \(N\). To get estimate~\cref{eq:Mprodneg}, we repeat the proof above with \(k=-k\) but using the ergodicity and \(\mathbb{P}\)-invariance of \(T\).   \end{proof}
 
 \begin{lemma} \label{Lemma:Mterm}
 Let \(\chi_n(\omega) = \chi_0 \circ T^{n}(\omega)\) be a sequence of random variables and let \[ E_{n,m}: =\sum_{l=m}^{n}\Big(\prod_{k=l}^{n}M_{k}\Big)\chi_l\] for \(n>m\).
 Then \[E_{n,0} = E_{0,-n} \circ T^{n}.\] 
 \end{lemma}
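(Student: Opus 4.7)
The plan is to reduce the identity to a straightforward reindexing of the sum, once the stationarity of $M_k$ and $\chi_k$ under the shift $T$ has been used. Recall that from Equation~\eqref{eq:MT} we have $M_k = M_0 \circ T^k$, and by hypothesis $\chi_k = \chi_0 \circ T^k$. Consequently, for any integer $n$,
\[ M_k \circ T^n = M_0 \circ T^{k+n} = M_{k+n}, \qquad \chi_k \circ T^n = \chi_0 \circ T^{k+n} = \chi_{k+n}. \]
These two cocycle-type identities are the only properties of $M_k$ and $\chi_k$ that will enter the argument.

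First I would write out $E_{0,-n}$ explicitly from the definition,
\[ E_{0,-n} = \sum_{l=-n}^{0} \Bigl( \prod_{k=l}^{0} M_k \Bigr) \chi_l, \]
and then compose with $T^n$. Since composition with $T^n$ commutes with finite sums and finite products, it passes through the outer sum and the inner product, yielding
\[ E_{0,-n} \circ T^n = \sum_{l=-n}^{0} \Bigl( \prod_{k=l}^{0} (M_k \circ T^n) \Bigr) (\chi_l \circ T^n) = \sum_{l=-n}^{0} \Bigl( \prod_{k=l}^{0} M_{k+n} \Bigr) \chi_{l+n}. \]

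The final step is the change of summation indices $l' = l + n$ and $k' = k + n$. The outer index then ranges from $0$ to $n$, and the inner index ranges from $l'$ to $n$, so the right-hand side becomes
\[ \sum_{l'=0}^{n} \Bigl( \prod_{k'=l'}^{n} M_{k'} \Bigr) \chi_{l'} = E_{n,0}, \]
as required. No estimates are involved and there is no real obstacle; the only point worth being careful about is that $T^n$ distributes over the finite sum and product and that both indices of summation/product must be shifted simultaneously.
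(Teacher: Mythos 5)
Your proof is correct and follows essentially the same argument as the paper: both rest on the identities $M_{k}\circ T^{n}=M_{k+n}$ and $\chi_{l}\circ T^{n}=\chi_{l+n}$ combined with a simultaneous shift of the summation and product indices by $n$; you merely run the computation from $E_{0,-n}\circ T^{n}$ to $E_{n,0}$ rather than the other way around.
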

 
\begin{proof}
 \begin{align*}
 E_{n,0}(\omega) & =\sum_{l=0}^{n}\Big(\prod_{k=l}^{n}M_{k}\Big)\chi_l(\omega) \\
 & = \sum_{l=-n}^{0}\Big(\prod_{k=l}^{0}M_{k+n}\Big)\chi_{l+n}(\omega) \\
 & = \sum_{l=-n}^{0}\Big(\prod_{k=l}^{0}M_{k} \circ T^{n}(\omega)\Big) \Big(\chi_n \circ T^{n}(\omega)\Big)_l \\
 & = E_{0,-n} \circ T^{n} (\omega).
 \end{align*}
\end{proof}

 \begin{lemma}\label{Lemma:Mterm2} Let \(\chi_n(\omega) = \chi_0 \circ T^{n}(\omega)\) be non-negative random variables with finite expectation, and suppose Assumption \ref{as:four} holds. Then \[ \sum_{l=0}^{n} \prod_{k=l}^{n-1}M_{k}\chi_l \leq B_\xi \circ T^{n-1}, \]
  
  where
  \begin{equation}\label{eq:beta_xi}
   B_\xi= C_{\omega, \xi}\sum_{l=0}^{\infty} (\beta + \xi)^{l} \chi_{-l} + \chi_1
  \end{equation}
   is an almost surely finite random variable and \( C_{\omega, \xi}\) is as defined by~\cref{eq:C}.
  \end{lemma}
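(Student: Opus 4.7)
The plan is to use the time-shift structure (stationarity under $T$) to rewrite the finite-horizon sum so that all its terms are indexed by non-positive time, at which point the pathwise geometric bound from Lemma~\ref{Lemma:Mprod} applies directly.

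First I split off the $l=n$ summand: the empty product equals $1$, so this term is just $\chi_n = \chi_1\circ T^{n-1}$, which matches the $\chi_1$ piece of $B_\xi$ once that is composed with $T^{n-1}$. For $l<n$, I rewrite the product backwards from $n-1$. Using $M_k = M_{k-(n-1)}\circ T^{n-1}$ and reindexing $j=(n-1)-k$,
\[ \prod_{k=l}^{n-1} M_k = \Bigl[\prod_{j=0}^{n-1-l} M_{-j}\Bigr]\circ T^{n-1}. \]
The backward product on the right has $N=n-l$ factors, so Lemma~\ref{Lemma:Mprod}, estimate~\eqref{eq:Mprodpos}, gives
\[ \prod_{k=l}^{n-1} M_k \leq (C_{\omega,\xi}\circ T^{n-1})(\beta+\xi)^{n-l}. \]
Combining this with $\chi_l = \chi_{l-(n-1)}\circ T^{n-1}$ and changing variables via $j=n-1-l$, the finite sum becomes
\[ \sum_{l=0}^{n-1}\prod_{k=l}^{n-1} M_k\,\chi_l \leq \Bigl[C_{\omega,\xi}\sum_{j=0}^{n-1}(\beta+\xi)^{j+1}\chi_{-j}\Bigr]\circ T^{n-1}. \]

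Since Assumption~\ref{as:four} yields $\beta<1$, I may take $\xi$ small enough that $\beta+\xi\leq 1$, whence $(\beta+\xi)^{j+1}\leq (\beta+\xi)^j$; non-negativity of the summands then allows me to extend the sum to $j=\infty$. Adding back the separated $\chi_n$ term yields the stated bound by $B_\xi\circ T^{n-1}$.

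It remains to show that $B_\xi$ is almost surely finite. The factor $C_{\omega,\xi}$ is a.s.\ finite by Lemma~\ref{Lemma:Mprod}, and $\chi_1$ is a.s.\ finite because $\mathbb{E}\chi_0<\infty$. For the infinite series, the stationarity of the sequence $\chi_{-j}$ together with Tonelli's theorem gives
\[ \mathbb{E}\sum_{j=0}^{\infty}(\beta+\xi)^j\chi_{-j} = \frac{\mathbb{E}\chi_0}{1-(\beta+\xi)}<\infty, \]
so the sum is a.s.\ finite. The main technical obstacle is the careful bookkeeping of indices: after shifting by $T^{n-1}$ one must align the truncated backward product and the $\chi_l$ terms with the $l=0,\ldots,\infty$ form of $B_\xi$, and must be content with absorbing an extra factor $(\beta+\xi)$ using $\beta+\xi\leq 1$. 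Once that is done, the a.s.\ finiteness is an immediate consequence of Lemma~\ref{Lemma:Mprod} and the integrability hypothesis on $\chi_0$.
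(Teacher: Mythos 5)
Your proof is correct and follows essentially the same route as the paper: shift the whole sum back by $T^{n-1}$ (the paper isolates this step as \cref{Lemma:Mterm}), bound the resulting backward products by \cref{Lemma:Mprod}, extend the sum to infinity by non-negativity, and deduce a.s.\ finiteness of $B_\xi$ from $\mathbb{E}\chi_0<\infty$ via monotone convergence. Your explicit handling of the extra factor $(\beta+\xi)$ (absorbed using $\beta+\xi\le 1$) is a small bookkeeping point the paper glosses over, but it changes nothing of substance.
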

  
\begin{proof}
  By definition and by \cref{Lemma:Mterm}, we have that \[\sum_{l=0}^{n-1} \prod_{k=l}^{n-1}M_{k}\chi_l = E_{n-1,0} =  E_{0,-(n-1)} \circ T^{n-1} (\omega),\]
  where \[E_{0,-n} (\omega) = \sum_{l=-n}^{0}\Big(\prod_{k=l}^{0}M_{k}\Big) \chi_l = \sum_{l=-n}^{0}\Big(\prod_{k=0}^{-l}M_{-k}\Big)\chi_l. \] 
  Therefore,
  \[\sum_{l=0}^{n} \prod_{k=l}^{n-1}M_{k}\chi_l =E_{n-1,0} + \chi_n=\Big(E_{0,-(n-1)} +\chi_1\Big)\circ T^{n-1} (\omega).\]
  
  Then using estimate~\cref{eq:Mprodpos} from \cref{Lemma:Mprod} we have
  \begin{align*} 
  E_{0,-(n-1)} & \leq C_{\omega, \xi}\sum_{l=-(n-1)}^{0} (\beta + \xi)^{|l|} \chi_l \nonumber \\
   & \leq C_{\omega, \xi} \sum_{l=-\infty}^{0} (\beta + \xi)^{|l|} \chi_l,\\
   &= C_{\omega, \xi}\sum_{l=0}^{\infty}  (\beta + \xi)^{l} \chi_{-l}.
  \end{align*}
 %& \leq \sum_{l=-1}^{\infty} C_{\omega, \xi} (\beta + \xi)^{l} \chi_{-l},where the last inequality hold because \(C_{\omega, \xi} \geq 1\) and \((\beta + \xi)^{-l}> 1\)?? %
Let  \[B_\xi :=C_{\omega, \xi}\sum_{l=0}^{\infty} (\beta + \xi)^{l} \chi_{-l}+\chi_1,\]
then 
  \[\sum_{l=0}^{n} \prod_{k=l}^{n-1}M_{k}\chi_l \leq B_\xi \circ T^{n-1}, \] as required.
  
  It is clear that \(B_\xi\) is measurable since \(\chi_n\) are non-negative. We need to show that \(B_\xi\) is finite for a.e.\ \(\omega\).
  
  Since \(\beta < 1\) by Assumption~\ref{as:four}, we can choose \(\xi>0\) such that \(\beta + \xi < 1\). We know that \(C_{\omega, \xi}\) is a.s.\ finite by \cref{Lemma:Mprod} and \(\chi_1\) is non-negative with finite expectation. Hence, by Monotone Convergence Theorem, \[\mathbb{E}\Big(\sum_{l=0}^{\infty} (\beta + \xi)^{l} \chi_{-l}\Big) = \sum_{l=0}^{\infty} (\beta + \xi)^{l} \mathbb{E}(\chi_{-l}) < \infty. \] Hence \[\sum_{l=0}^{\infty} (\beta + \xi)^{l} \chi_{-l} < \infty\] almost surely. Therefore, \(B_\xi\) is a.s.\ finite as required. 
\end{proof}
    
For clarity, where necessary, we will use \(\sigma\) as a parameter in the notation for the remainder of this section.
\begin{proof}[Proof of \cref{th:Main}] 

By our choice of  $ \sigma^* $ and $ h $, we have $ \mathbb{E}M^*_k < 1 $, where \\ \(M_k^*:=~M_k(h,\rho_k(h,\sigma^*))\).

We consider Inequality~\cref{eq:Meq}. By monotonicity of \(M_k\) and \(\gamma_k\) we can replace \(\sigma^*\) inside the functions so that the Inequality~\cref{eq:Meq} still holds. We have
\begin{equation}\label{eq:Meq_sigma_star}
|\delta_{n}|^2 \leq \sigma^2\sum_{l=1}^{n} \prod_{k=l}^{n-1}M_{k}^*|R_{l-1}|^2\gamma_{l-1}^* +\prod_{k=0}^{n-1}M_{k}^*|QU_0-\eta|^2 + \sigma^2|R_{n}|^2, 
\end{equation}
where \(M_k^*:= M(h,\rho_k(h,\sigma^*))\) and \(\gamma_k^*:= \gamma(h,\rho_k(h,\sigma^*))\).

We note first that the second term of Inequality~\cref{eq:Meq_sigma_star} is bounded a.s.\ by~\cref{eq:Mprodneg};
\[\prod_{k=0}^{n-1}M_{k}^*|QU_0-\eta|^2 \leq C_{\omega, \xi}^{'*}(\beta^{*} + \xi)^{n} |QU_0-\eta|^2,\] where \(\beta^* =\beta(\sigma^*)\) and \(C_{\omega, \xi}^{'*} = C_{\omega, \xi}^{'}(\sigma^*).\)
Fix \(\xi>0\) so that \(\beta^* + \xi < 1\). Then,
\begin{equation}\label{eq:middle_term}
\lim_{n \to \infty}\prod_{k=0}^{n-1}M_{k}^*|QU_0-\eta|^2 \leq \lim_{n \to \infty} D\bar{\beta}^{n} |QU_0-\eta|^2 =0,
\end{equation}
with \(D=C_{\omega, \xi}^{'*}\) and \(\bar{\beta} = \beta^* +\xi\). 

Next, we use \cref{Lemma:Mterm2}. Let 
\begin{equation}\label{eq:C_n}
C_n: = B_\xi^* \circ T^{n-1} + |R\circ T^n|^2,
\end{equation} where \(B_\xi^*\) is as defined by Equation~\cref{eq:beta_xi} with \(\sigma\) replaced by \(\sigma^*\) and \(\chi_l = |R_{l-1}|^2\gamma_{l-1}^*\). Hence explicitly, 
\begin{equation}\label{eq:B_explicit}
 B_{\xi}^{*}= C_{\omega, \xi}^*\sum_{l=0}^{\infty} \bar{\beta}^{l} |R_{-l+1}|^2\gamma_{-l+1}^* + |R_{0}|^2\gamma_{0}^*,
\end{equation}
with \(\bar{\beta} = \beta^* + \xi\). The remaining terms of Inequality~\cref{eq:Meq_sigma_star} are bounded by \(\sigma^2 C_n\) which is a.s.\ finite and stationary by \cref{Lemma:Mterm2} and by our assumptions on \(R_n\).

Therefore,
\[|\delta_n|^2 \leq \sigma^2C_n + D\bar{\beta}^n|QU_0-\eta|^2\] and
\[\limsup_{n}\Big(|\delta_n|^2 -\sigma^2 C_n\Big) \leq 0,\] by Equation \eqref{eq:middle_term} a.s.\ as required. Furthermore it holds that \(\sigma^2C_n \to 0\) as \(\sigma \to 0\) since \(C_n\) does not depend on \(\sigma\).  \end{proof}

\section{A priori bound for strongly dissipative systems}\label{Sec:Apriori}

The next lemmas show that we can usually have a more explicit candidate for the a priori bound \(\rho_n\), if one has an estimate of the rate of contraction to the attractor. This rate is closely related to the absorbing ball property and to our requirement that the system is dissipative. This contraction can be shown to hold for many important dynamical systems, such as Lorenz~'63, '96 and the 2D, incompressible, Navier-Stokes. In fact, it is how we are able to show that these systems have the absorbing ball property and are dissipative. We will study this in more detail in the subsequent sections.

The next lemma derives a bound on the approximating solution based on a specific rate of contraction. The bound depends on the observation noise up to time \(t_n\), the initial guess \(\eta\), initial condition \(U(t_0)\) and the length of the data assimilation interval \(h\).

\begin{lemma} \label{lemma:apriori}
Let $ U $ be a solution to a semi-dynamical system and suppose that there exist constants \(c_1, c_2 >0\) such that 
\begin{equation} \label{eq:dissip}
|U(t)|^2 \leq e^{-c_1 (t-s)}|U(s)|^2 + c_2
\end{equation}
 for all \(0 \leq s < t\). 
 Let \(u(t)\) be the approximating solution as defined by Equation~\cref{eq:Approx_Sol}, then
 \off
  \begin{equation} \label{eq:iterated_dissip}
 |u(t_n)|^2 \leq  \phi_n(h,\eta, |U(t_0)|^2)+ 2\sigma^2\sum_{k=0}^{n} e^{-c_1kh}|R_{n-k}|^2
  \end{equation}
  \off
 for all \(n \in \mathbb{N}\), where \(h=t_n-t_{n-1} \) and 
 \off
 \[\phi_n(h,\eta,x) = |\eta|^2+\frac{2x}{c_1h}+ 3c_2\frac{1- e^{-c_1nh}}{1- e^{-c_1h}}.\]
 \off
\end{lemma}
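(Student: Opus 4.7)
The plan is to convert the continuous-time dissipation hypothesis~\eqref{eq:dissip} into a one-step discrete recursion for $|\bar u_n|^2=|u(t_n)|^2$, iterate it, and sum the resulting geometric series.

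First, I would note that on each window $[t_{n-1},t_n)$ the approximating solution is $u(t)=\psi(t,t_{n-1},\bar u_{n-1})$, so~\eqref{eq:dissip} with $s=t_{n-1}$ gives $\lim_{t\to t_n^-}|u(t)|^2 \le e^{-c_1 h}|\bar u_{n-1}|^2 + c_2$. Setting $\alpha:=e^{-c_1 h}$ and using the update rule~\eqref{eq:Approx} together with the orthogonality $P\perp Q$ and $PR_n=R_n$, the state at $t_n$ splits as
\[
|\bar u_n|^2 \;=\; |Q\psi(t_n,t_{n-1},\bar u_{n-1})|^2 \;+\; |PU(t_n)+\sigma R_n|^2.
\]
The $Q$-piece is bounded by $|\psi|^2\le \alpha|\bar u_{n-1}|^2 + c_2$, while the $P$-piece is bounded by $(a+b)^2\le 2a^2+2b^2$ to give $\le 2|U(t_n)|^2 + 2\sigma^2|R_n|^2$. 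Applying~\eqref{eq:dissip} a second time, now to $U$ with $s=t_0$, produces $|U(t_n)|^2\le \alpha^n|U(t_0)|^2 + c_2$, and together these yield the key recursion
\[
|\bar u_n|^2 \;\le\; \alpha\,|\bar u_{n-1}|^2 \;+\; 3c_2 \;+\; 2\alpha^n|U(t_0)|^2 \;+\; 2\sigma^2|R_n|^2 \qquad (n\ge 1).
\]

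Next I would iterate this recursion down to $n=0$. Using that $\eta\in Q\mathbf H$, I would bound the initial state by $|\bar u_0|^2 = |\eta|^2 + |PU_0+\sigma R_0|^2 \le |\eta|^2 + 2|U(t_0)|^2 + 2\sigma^2|R_0|^2$. The geometric $c_2$-series sums cleanly to $3c_2(1-\alpha^n)/(1-\alpha)$, the noise contributions reindex into $2\sigma^2\sum_{k=0}^{n}\alpha^k|R_{n-k}|^2$, and the $|U(t_0)|^2$ pieces aggregate (because $\sum_{j=1}^{n}\alpha^{n-j}\alpha^j = n\alpha^n$) into a coefficient $2(n+1)\alpha^n$. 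Finally I would bound $\alpha^n|\eta|^2\le |\eta|^2$ and invoke the elementary calculus fact that $(n+1)\alpha^n=(n+1)e^{-c_1 nh}\le (c_1 h)^{-1}$, obtained by maximising $x\mapsto x e^{-c_1 h x}$ on $[0,\infty)$, to extract the $2|U(t_0)|^2/(c_1 h)$ summand of $\phi_n$.

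The main obstacle is the last bookkeeping step: the coefficient of $|U(t_0)|^2$ is produced jointly by the arithmetico-geometric summation inside the iteration and by the initial-condition bound, so care is needed to verify that the resulting coefficient $(n+1)\alpha^n$ can indeed be controlled uniformly in $n$ by $1/(c_1h)$ and to track which terms contribute to $|\eta|^2$, $|U(t_0)|^2$, the $c_2$-sum, and the $R$-sum. Apart from this, every step is either a direct application of~\eqref{eq:dissip} or standard manipulation of a scalar linear recursion.
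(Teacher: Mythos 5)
Your proposal is correct and follows essentially the same route as the paper: split \(|\bar u_n|^2\) orthogonally at each update time, bound the \(Q\)-part via the dissipation estimate~\cref{eq:dissip} on the preceding window and the \(P\)-part by \(2|U(t_n)|^2+2\sigma^2|R_n|^2\) with~\cref{eq:dissip} applied to \(U\), then iterate the resulting scalar recursion and sum the geometric series using \(n e^{-c_1 nh}\le (c_1h)^{-1}\). The only differences are harmless bookkeeping choices — you substitute the bound on the observation term before iterating rather than after, and your explicit tracking of the \(PU_0\) contribution to \(|\bar u_0|^2\) (leading to the coefficient \((n+1)e^{-c_1nh}\), which is indeed at most \(2/(ec_1h)<(c_1h)^{-1}\) for \(n\ge 1\)) is if anything slightly more careful than the paper's.
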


\begin{proof} By Inequality~\cref{eq:dissip} and because \(u_{n-1}(t)\) is a solution in the interval \([t_{n-1},t_n)\), we have
\begin{equation} \label{eq:approx_dissip}
|u(t_n^-)|^2 \leq e^{-c_1h}|u(t_{n-1})|^2 + c_2.
\end{equation}
By definition and continuity of \(Qu(t)\) at \(t_{n}\) we have
\begin{equation}\label{eq:PQ_sept}
|u(t_{n})|^2 = |Qu(t_{n}^-)|^2 + |PU(t_{n})+\sigma R_{n}|^2 \leq |u(t_{n}^-)|^2 + |PU(t_{n})+\sigma R_{n}|^2.
\end{equation}
For simplicity, let \(O_n =|PU(t_n)+\sigma R_{n}|^2 \) and substitute Inequality~\cref{eq:approx_dissip} into Inequality~\cref{eq:PQ_sept} to get;
\begin{equation*}
|u(t_n)|^2 \leq e^{-c_1h}|u(t_{n-1})|^2 + O_{n-1}  + c_2.
\end{equation*}
Therefore by induction
\begin{equation} \label{eq:iterative_interm}
|u(t_n)|^2 \leq  e^{-c_1nh}|u(t_0)|^2+ \sum_{k=0}^{n-1} e^{-c_1kh}\Big(O_{n-k}+ c_2\Big).
\end{equation}
We note that
\begin{align*}
|PU(t_{n-k})+\sigma R_{n-k}|^2 &\leq 2|U(t_{n-k})|^2+2\sigma^2|R_{n-k}|^2\\ &\leq  2e^{-c_1(n-k)h}|U(t_0)|^2 +2c_2+ 2\sigma^2|R_{n-k}|^2,
\end{align*}
where we have used Inequality~\cref{eq:dissip} on \(U(t_{n-k})\). This implies
\begin{align*}
\sum_{k=0}^{n-1} e^{-c_1kh}O_{n-k} &\leq \sum_{k=0}^{n-1} e^{-c_1kh}\Big( 2e^{-c_1(n-k)h}|U(t_0)|^2 +2c_2+ 2\sigma^2|R_{n-k}|^2\Big)\\
&=2n e^{-c_1nh}|U(t_0)|^2 + 2c_2\frac{1- e^{-c_1nh}}{1- e^{-c_1h}} +2\sigma^2\sum_{k=0}^{n-1} e^{-c_1kh}|R_{n-k}|^2.
\end{align*}
Then Inequality~\cref{eq:iterative_interm} becomes
\begin{equation*} 
|u(t_n)|^2 \leq |\eta|^2+\frac{2|U(t_0)|^2}{c_1h}+ 3c_2\frac{1- e^{-c_1nh}}{1- e^{-c_1h}}+ 2\sigma^2\sum_{k=0}^{n} e^{-c_1kh}|R_{n-k}|^2.
\end{equation*}
where we have used that \(ne^{-c_1hn} \leq \frac{1}{c_1h}\) for all \(n\geq 0\) and \(h>0\) and  \(|u(t_0)|= |\eta|^2 + \sigma^2 |R_{0}|^2\), where \(\eta\) is the initial guess. Thus we have shown Inequality~\cref{eq:iterated_dissip}.
\end{proof}

We can readily see that Inequality~\cref{eq:dissip} gives us an absorbing ball \(B(0,r)\) with \(r>c_2^{1/2}\) since any bounded set will eventually be inside the ball. However, we cannot deduce forward invariance. We will see that the actual contractions we encounter in the dynamical systems we study, do guarantee forward invariance and hence imply that Assumption~\ref{as:one} holds.

% \(B(0,r) = {u:|U| \leq r}\) so that |U|^2 \leq r^2.%

The following corollary of \cref{lemma:apriori} gives the a priori bound required for Assumption~\ref{as:two}.
\begin{corollary} \label{cor:apriori}
 Let the conditions of \cref{lemma:apriori} hold and let \(\delta_n=U(t_n)-u(t_n)\) be the data assimilation error and \(h=t_{n}-t_{n-1}\) the update interval. Then there exists a stationary, a.s.\ finite process 
\begin{equation}\label{eq:apriori}
\rho_n = \bar{K}+ F(h)+ 4\sigma^2\sum_{k=0}^{\infty} e^{-c_1kh}|R_{n-k}|^2,
\end{equation}
 such that \(|\delta_n|^2 \leq \rho_n\), for all \(n \in \mathbb{N}\). 
\end{corollary}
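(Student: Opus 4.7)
The plan is to bound $|\delta_n|^2 \leq 2|U(t_n)|^2 + 2|u(t_n)|^2$ and treat each term on the right separately, combining the dissipation estimate \eqref{eq:dissip} for the reference solution with \cref{lemma:apriori} for the approximating solution. The corollary then follows by reorganising the resulting constants into the form $\bar{K} + F(h) + 4\sigma^2 \sum_{k=0}^\infty e^{-c_1 kh}|R_{n-k}|^2$.

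First I would handle $|U(t_n)|^2$. By Assumption~\ref{as:one} we may without loss of generality assume $U(t_0) \in \mathscr{B}$, so that iterating \eqref{eq:dissip} gives $|U(t_n)|^2 \leq e^{-c_1 nh}K + c_2 \leq K + c_2$ uniformly in $n$; this contributes a purely deterministic constant that will be absorbed into $\bar{K} := 2(K+c_2)$.

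Next I would apply \cref{lemma:apriori} to $|u(t_n)|^2$, yielding
\[
|u(t_n)|^2 \leq \phi_n(h,\eta,|U(t_0)|^2) + 2\sigma^2 \sum_{k=0}^n e^{-c_1 kh}|R_{n-k}|^2.
\]
The prefactor $\phi_n$ is bounded uniformly in $n$ because $(1-e^{-c_1 nh})/(1-e^{-c_1 h}) \leq 1/(1-e^{-c_1 h})$, so $\phi_n \leq |\eta|^2 + 2|U(t_0)|^2/(c_1 h) + 3c_2/(1-e^{-c_1 h}) =: F_0(h)$, and since the summands are non-negative, extending the finite sum to $k = \infty$ only enlarges the bound. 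Multiplying by $2$ and setting $F(h) := 2F_0(h)$ produces exactly the claimed form of $\rho_n$.

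It remains to verify that $\rho_n$ is stationary and a.s.\ finite. Stationarity is immediate from $R_{n-k} = R \circ T^{n-k}$, which shows $\rho_n = \rho_0 \circ T^n$ since the constants $\bar{K}, F(h)$ do not depend on $n$. For a.s.\ finiteness, monotone convergence combined with $\mathbb{E}|R|^2 = 1$ gives
\[
\mathbb{E}\sum_{k=0}^\infty e^{-c_1 kh}|R_{-k}|^2 = \sum_{k=0}^\infty e^{-c_1 kh} = \frac{1}{1-e^{-c_1 h}} < \infty,
\]
so the series is finite almost surely. The main obstacle here is merely bookkeeping of constants arising from the two separate bounds; there is no genuine analytic difficulty beyond what is already contained in \cref{lemma:apriori} and the absorbing-ball assumption.
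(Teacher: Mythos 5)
Your proof is correct and follows essentially the same route as the paper: the decomposition $|\delta_n|^2 \leq 2|U(t_n)|^2 + 2|u(t_n)|^2$, the dissipation estimate \cref{eq:dissip} for the reference solution, \cref{lemma:apriori} for the approximating solution, extension of the finite sum to an infinite one, and the Monotone Convergence Theorem for a.s.\ finiteness are all exactly the paper's argument, with only an immaterial difference in how the constants are split between $\bar{K}$ and $F(h)$.
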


\begin{proof}
 By definition of \( |\delta_n|^2\), we have
 \begin{equation}\label{eq:error_decomp}
 |\delta_n|^2 
 \leq 2|U(t_n)|^2+ 2|u(t_{n})|^2.
 \end{equation}
We insert \cref{eq:dissip,eq:iterated_dissip} into \cref{eq:error_decomp} to obtain
\begin{equation*}
|\delta_n|^2 \leq  2\phi(h,\eta,|U(t_0)|^2)+ 4\sigma^2\sum_{k=0}^{n} e^{-c_1kh}|R_{n-k}|^2 +2e^{-c_1hn}|U(t_0)|^2+2c_2.
\end{equation*}
The above simplifies to 
\[|\delta_n|^2 \leq  \bar{K}+ F(h)+ 4\sigma^2 \sum_{k=0}^{\infty} e^{-c_1kh}|R_{n-k}|^2,\]
where \(F(h) = \frac{6c_2}{1-e^{-c_1h}}+\frac{4|U(t_0)|^2}{c_1h} \) and \(\bar{K} = 2\Big(|U(t_0)|^2+ c_2+|\eta|^2\Big),\) as required.

To see that \(\rho_n\) is a measurable process, set \[\rho_n^N:= \bar{K}+ F(h) + 4\sigma^2\sum_{k=0}^{N} e^{-c_1kh}|R_{n-k}|^2.\] For each \(N\), $\rho_n^N $ is a finite sum of random variables and therefore measurable and $ \{ \rho_n^N\} $ is a pointwise non decreasing sequence, since we are adding non-negative terms. Therefore, \(\rho_n = \sup_{N} \rho_n^N,\) is measurable. To see that \(\rho_n\) is almost surely finite, we note that by the Monotone Convergence Theorem 
\begin{equation}\label{eq:exp_rho}
\mathbb{E}(\rho_n) = \sup_N \mathbb{E}(\rho_n^N) = \bar{K}+ F(h)+  \frac{4\sigma^2}{ 1-e^{-c_1h}} < \infty
\end{equation}
 for all \(h>0\).
Furthermore, \(\rho_n\) is stationary as \(R_n\) is stationary.
\end{proof}

We can see from Equation~\cref{eq:exp_rho} that the a priori bound behaves badly at \(h=0\) as its expectation is \(\mathcal{O}(\frac{1}{h})\), for small \(h\). In the next lemma we show that for almost all \(\omega \in \Omega\), \(\lim_{h \to 0} \rho_nh: = D_{\omega}\) exists. Therefore, pointwise, for small \(h\), \(\rho_n = \mathcal{O}(\frac{1}{h})\) as well. We note also that \(\rho_n\) is decreasing if the noise level \(\sigma\) decreases and converges to a noise-independent constant when \(\sigma \to 0\).

\begin{lemma}\label{Lemma:apriori}
For \(\rho_n\) as defined by Equation~\cref{eq:apriori} we have that
\begin{enumerate}
\item \(\lim_{h \to 0} \mathbb{E}(\rho_n)h =C <\infty\) where \(C>0\) is a constant,
\item \(\lim_{h \to 0} \rho_n(\omega)h = D_{\omega}\) for a.e. \(\omega\),
\item for all \(h>0\), \(\rho_n(\omega)\) is monotone in \(\sigma\) and $ \lim_{\sigma \to 0}\rho_n(\omega) =\bar{K} + F(h)$ almost surely. 

\end{enumerate}
\end{lemma}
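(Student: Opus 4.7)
The plan is to work directly from the explicit formula
\[\rho_n = \bar{K} + F(h) + 4\sigma^2 S_n(h), \qquad S_n(h) := \sum_{k=0}^{\infty} e^{-c_1 k h}\,|R_{n-k}|^2,\]
combining elementary asymptotics of $h/(1-e^{-c_1 h})$ near $h=0$ with Birkhoff's ergodic theorem and a classical Abelian theorem to upgrade the expectation identity \cref{eq:exp_rho} to a pointwise a.s.\ statement.

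For part (1), I would multiply \cref{eq:exp_rho} by $h$ and let $h\to 0^+$. Using $h/(1-e^{-c_1 h})\to c_1^{-1}$, the contribution $\bar{K}h$ vanishes, and the remaining pieces converge to an explicit finite constant $C = (6c_2 + 4|U(t_0)|^2 + 4\sigma^2)/c_1$. This is a routine Taylor expansion.

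For part (2), the only non-trivial issue is controlling $h\,S_n(h)$ pathwise. Since $T^{-1}$ is measure-preserving and ergodic and $\mathbb{E}|R|^2 = 1$, Birkhoff's theorem yields the Ces\`aro convergence
\[\frac{1}{N+1}\sum_{k=0}^{N} |R_{n-k}(\omega)|^2 \longrightarrow 1 \quad \text{for a.e.\ } \omega.\]
I then invoke the classical Abelian theorem that, for a non-negative sequence, Ces\`aro convergence to $L$ implies Abel summability to $L$: taking $r = e^{-c_1 h}\in(0,1)$ gives $(1-e^{-c_1 h})\,S_n(h)\to 1$ a.s., whence $h\,S_n(h) = \frac{h}{1-e^{-c_1 h}}\cdot(1-e^{-c_1 h})\,S_n(h) \to 1/c_1$ a.s. Combining with the deterministic pieces of $\rho_n h$ yields $\rho_n(\omega) h \to D_\omega$ a.s., and ergodicity forces $D_\omega$ to equal the deterministic constant $C$ of part (1).

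For part (3), monotonicity is immediate from the explicit formula, since the coefficient $4S_n(h)$ of $\sigma^2$ is non-negative and independent of $\sigma$. On the full-measure set where $S_n(h) < \infty$ (guaranteed by \cref{cor:apriori}), letting $\sigma\to 0^+$ pointwise kills the noise term and gives $\rho_n \to \bar{K} + F(h)$ a.s. The main obstacle throughout is part (2): crude bounds would only provide convergence in expectation, and the resolution is the pathwise Abelian theorem applied to Birkhoff's ergodic averages.
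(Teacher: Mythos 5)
Your proposal is correct, and parts (1) and (3) coincide with the paper's argument (direct Taylor expansion of \(h/(1-e^{-c_1h})\) applied to \cref{eq:exp_rho}, and reading monotonicity and the \(\sigma\to 0\) limit off the explicit formula \cref{eq:apriori}). For part (2) your route is genuinely different and in fact sharper. The paper performs an explicit summation by parts on \(S_n(h)\) (its Equation \cref{eq:Parts}), uses Birkhoff to control the inner partial sums by \(\bar{D}_{\omega}k\) with \(\bar{D}_{\omega}=\sup_k \frac{1}{k}\sum_{j=0}^{k}|R_{n-j}|^2\), and arrives at a limit \(D_{\omega}\) expressed through this random supremum; strictly speaking what that argument delivers is an a.s.\ finite \emph{upper bound} on \(\limsup_{h\to 0}\rho_n h\) rather than the existence of the limit itself. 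You instead feed the Birkhoff Ces\`aro convergence \(\frac{1}{N+1}\sum_{k=0}^{N}|R_{n-k}|^2\to 1\) into the classical Frobenius/Abelian theorem (Ces\`aro summability implies Abel summability, valid here with \(r=e^{-c_1h}\)), obtaining \((1-e^{-c_1h})S_n(h)\to 1\) a.s.\ and hence \(hS_n(h)\to 1/c_1\) a.s. This both proves that the limit exists and identifies \(D_{\omega}\) as the deterministic constant \(C=(6c_2+4|U(t_0)|^2+4\sigma^2)/c_1\) of part (1), which is stronger than the paper's statement; your closing appeal to ergodicity is then redundant, since the Abelian step already yields a constant limit. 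The two approaches are of course cousins (summation by parts is the standard proof of the Abelian theorem), but invoking the theorem wholesale buys you the exact limit where the paper settles for a bound.
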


\begin{proof}
To prove item 1, note that
\begin{align*}
\lim_{h \to 0} \mathbb{E}(\rho_n(\omega))h &= \lim_{h \to 0} (\bar{K}+ F(h)+ 
4\sigma^2 \sum_{k=0}^{\infty} e^{-c_1kh})h\\
&=\lim_{h \to 0} \frac{6c_2h}{1-e^{-c_1h}}+\frac{4|U(t_0)|^2h}{c_1h} + \frac{4\sigma^2h }{1-e^{-c_1h}}\\
&= \frac{6c_2+4|U(t_0)|^2+4\sigma^2}{c_1}:=C.
\end{align*}

To prove item 2, it remains to check the pointwise limit of the third term in Equation~\cref{eq:apriori}. Using summation by parts, for any \(N>0\),
\begin{equation}\label{eq:Parts}
\sum_{k=0}^{N} e^{-c_1kh}|R_{n-k}|^2 = e^{-Nc_1h} \sum_{k=0}^{N}|R_{n-k}|^2+\sum_{k=0}^{N-1}e^{-kc_1h}(1-e^{c_1h})\sum_{j=0}^{k}|R_{n-j}|^2.
\end{equation}

Considering the first term of RHS of Equation~\cref{eq:Parts}, by ergodicity of \(R_n\),
\begin{equation*}
\lim_{N \to \infty} Ne^{-Nc_1h} \frac{\sum_{k=0}^{N}|R_{n-k}|^2}{N} = \lim_{N \to \infty} \Big(Ne^{-Nc_1h}\Big)\mathbb{E}(|R_{n-k}|^2) =  0,
\end{equation*}
for a.e.\ \(\omega\).

Next we consider the second term. Again from ergodicity, we have that\\ \(\lim_{k \to \infty}\frac{\sum_{j=0}^{k}|R_{n-j}|^2}{k} = 1\), since \(\mathbb{E}(|R_n|^2) = 1\). Therefore, for any \(\epsilon>0\), there exists \(N_{\omega, \epsilon}\) such that for all \(k\geq N_{\omega, \epsilon}\), \(\frac{\sum_{j=0}^{k}|R_{n-j}|^2}{k} < 1+\epsilon\). Hence for any $ k>0 $, \[\sum_{j=0}^{k}|R_{n-j}|^2 =\frac{\sum_{j=0}^{k}|R_{n-j}|^2}{k}k \leq \bar{D}_{\omega}k, \] where \[\bar{D}_{\omega} := \sup_{k} (\frac{\sum_{j=0}^{k}|R_{n-j}|^2}{k}),\] and \(\bar{D}_{\omega} < \infty\) since for large enough \(k\) it is smaller than \(1+ \epsilon\).

Thus the second term of the RHS of Equation~\cref{eq:Parts} is bounded a.s.\ by \[ (1-e^{-c_1h})\bar{D}_{\omega} \sum_{k=0}^{N-1}e^{-kc_1h}k=(1-e^{-c_1h})\bar{D}_{\omega}\frac{e^{-c_1h}}{(1-e^{-c_1h})^2}=\bar{D}_{\omega}\frac{e^{-c_1h}}{(1-e^{-c_1h})}.\]
In summary, in the limit \(h \to 0\), \[\rho_n h  \to \frac{6c_2}{c_1}+\frac{4|U(t_0)|^2}{c_1}+ 4\sigma^2\bar{D}_{\omega}:=D_{\omega}\] and \(\rho_n = \mathcal{O}(\frac{1}{h})\) a.s.\ as required.

For item 3, we note that the random term of \(\rho_n\) is a.s.\ finite, therefore for a.e.\ \(\omega\), and \(h>0\), $ \lim_{\sigma \to 0}\rho_n =\bar{K}+ F(h)$, is a constant that does not depend on the noise. 
\end{proof}
 
\section{Application to finite dimensional systems}
\label{sec:Finite}
In this section we derive more concrete properties, sufficient to imply the general Assumptions~\ref{as:one}~to~\ref{as:five} in \cref{sec:Assump}, for dissipative and finite dimensional systems of the form 
\begin{equation}\label{eq:ODE}
\frac{dU}{dt} + A U + B(U,U) = f,
\end{equation}
 where solutions \(U\) and forcing \(f\) are functions in a finite dimensional vector space \(\mathbf{H}=\mathbb{R}^d\), $ A $ is a linear operator and \(B\) is a symmetric, bilinear operator; consequently, the results of \cref{th:Main,th:Main2} hold. In \cref{ss:L63,ss:L96} we apply our results to the Lorenz~'63 and Lorenz~'96 models  respectively.

We assume the following properties,
\begin{myPrp}\label{Prop:Finite}
\begin{enumerate}
\item $ B(U,V) = B(V,U)$ for all \(U,V \in \mathbf{H}\).
\item \((B(U,U),U) = 0,\) for all \(U \in \mathbf{H}\).
\item $ B(QU,QU)=0, $ for all \(U \in \mathbf{H}\).
\item  There exists a constant \(a_1>0\) such that for all \(U,V \in \mathbf{H}\), \\ \(|(B(U,V)| \leq a_1|U||V| \).
\item  \((AU,U) \geq |U|^2\), for all \(U \in \mathbf{H}\).
\end{enumerate}
\end{myPrp}
Similar properties are used in~\cite{Law2016},~\cite{Law2014} and~\cite{Sanz-Alonso2014}. For the Lorenz~'63 model and standard observation operator \(P\), as specified in \cref{ss:L63}, Properties~\ref{Prop:Finite}.1 to~\ref{Prop:Finite}.4 are easily deduced, while Property~\ref{Prop:Finite}.5 is shown in e.g.~\cite{Hayden2011}. For the Lorenz~'96 system and standard \(P\), as specified in \cref{ss:L96}, all the properties are shown in~\cite{Law2016}.

\textbf{Remark 1:} 
Property  \ref{Prop:Finite}.1 is not a restriction on our dynamical system \cref{eq:ODE} since only the symmetric part of $ B $ enters the dynamics anyway. Property \ref{Prop:Finite}.2 implies that the non-linear term does not contribute to the change in energy, analogous with the nonlinear part of the Navier-Stokes Equations. Property \ref{Prop:Finite}.3 effectively represents a non trivial condition on the observation operator $ P $, ensuring a form of observability of the system. Property \ref{Prop:Finite}.4 is true for any bilinear operator on a finite dimensional space and hence represents no loss of generality. Property \ref{Prop:Finite}.5 reflects the fact that \(Au\) is considered to be a dissipative term in the dynamics.

\textbf{Remark 2:} From the above description of the dynamical system, it is clear there are many parallels with the \mbox{N-S} equations, such as dissipativity, and a nonlinearity which is  quadratic and energy conserving. Furthermore, we will see in \cref{sec:NS} that the \mbox{N-S} equations can be rewritten in a very similar form as Equation \cref{eq:ODE}.

\textbf{Remark 3:} We note that by orthogonality of \(Q\) and following from Property~\ref{Prop:Finite}.5 we always have that
\begin{equation} \label{eq:Property6}
(AU,PU) \geq a_2|PU|^2-a_3|U|^2
\end{equation}
for some \(a_2 > 0\) and \(a_3\geq 0\).\footnote{\((AU,PU) = (A(P+Q)U,PU) = (APU,PU)+(AQU,PU)\geq |PU|^2 - \|A\||U|^2\), where we have used Property~\ref{Prop:Finite}.5. Therefore we have \(a_2 = 1\) and \(a_3 = \|A\|\) but these are not necessarily the sharpest such constants.}

\textbf{Remark 4:} We note that if Property~\ref{Prop:Finite}.3 holds for an orthogonal projection \(Q\) then they also hold for any projection whose image is contained in the image of \(Q\). 

The next two lemmas follow directly from Property~\ref{Prop:Finite}. For the case of Lorenz~'96, the proofs are given in~\cite{Law2014}.

\begin{lemma}\label{Lemma:Pre_Prop1.3}
Properties~\ref{Prop:Finite}.1~and~\ref{Prop:Finite}.2 imply that
\[(B(V,V),U) = -2(B(U,V),V)\] holds for all \(U,V \in \mathbf{H}\).
\end{lemma}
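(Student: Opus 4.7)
The plan is to use a polarization argument: apply the energy-cancellation Property~\ref{Prop:Finite}.2 to a one-parameter family of vectors $V + tU$ and extract the desired identity from the coefficient of $t$ in the resulting polynomial.

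First I would note that by Property~\ref{Prop:Finite}.1 (symmetry) and bilinearity,
\[
B(V+tU, V+tU) = B(V,V) + 2t\,B(U,V) + t^2 B(U,U).
\]
Taking the inner product with $V+tU$ yields a cubic polynomial in $t$:
\[
(B(V+tU,V+tU), V+tU) = \sum_{k=0}^{3} c_k t^k,
\]
where $c_0 = (B(V,V),V)$, $c_1 = (B(V,V),U) + 2(B(U,V),V)$, $c_2 = 2(B(U,V),U) + (B(U,U),V)$, and $c_3 = (B(U,U),U)$.

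Next I would apply Property~\ref{Prop:Finite}.2 to $V+tU$ for every $t \in \mathbb{R}$, which forces this polynomial to vanish identically; hence every $c_k = 0$. The coefficients $c_0$ and $c_3$ vanish automatically by Property~\ref{Prop:Finite}.2 applied to $V$ and $U$ respectively, but $c_1 = 0$ gives the nontrivial relation
\[
(B(V,V), U) + 2(B(U,V), V) = 0,
\]
which is exactly the claim. (The coefficient $c_2 = 0$ is the same identity with the roles of $U$ and $V$ interchanged, providing a useful consistency check.)

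There is essentially no obstacle here; the only subtlety is choosing the right polarization so that the target identity appears as a single coefficient of $t$. One might be tempted to expand $(B(V+U, V+U), V+U) = 0$ directly, but this collapses all $t$-powers together and yields a single combined equation rather than separating $(B(V,V),U) + 2(B(U,V),V)$ from $(B(U,U),V) + 2(B(U,V),U)$; introducing the free parameter $t$ and equating coefficients cleanly separates the two.
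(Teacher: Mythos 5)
Your proof is correct and is essentially the same polarization argument the paper uses: the paper expands \((B(U+V,U+V),U+V)\) and \((B(U-V,U-V),U-V)\) and adds the two resulting identities, which is exactly your "equate the coefficient of \(t\)" step carried out by evaluating the vanishing cubic at \(t=\pm1\) instead of treating \(t\) as a free parameter. The only difference is presentational, so no further comparison is needed.
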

\begin{comment}
\textbf{Proof:} The proof follows by expanding \((B(U+V,U+V),U+V)\) and \((B(U-V,U-V),U-V)\) using bilinearity and Properties~\ref{Prop:Finite}.1~and~\ref{Prop:Finite}.2.
\end{comment}

The proof is simply expanding $ (B(U+V,U+V),U+V)$ and $ (B(U-V,U-V),U-V) $ using Properties~\ref{Prop:Finite}.1 and~\ref{Prop:Finite}.2 and bilinearity of \(B\).
\begin{comment}
\textbf{Proof:} 
By Properties~\ref{Prop:Finite}.1 and~\ref{Prop:Finite}.2 and bilinearity of \(B\), we obtain 
\begin{align*}
(B(U+V,U+V),U+V) = & 2(B(U,V),U)+2(B(U,V),V)\\&+(B(U,U),V)+(B(V,V),U) \\&= 0.
\end{align*}
Similarly, one computes that
\begin{align*}
(B(U-V,U-V),U-V) = &-2(B(U,V),U)+2(B(U,V),V)\\&-(B(U,U),V)+(B(V,V),U)  \\&= 0.
\end{align*}
Adding the two we get \[4(B(U,V),V)+2(B(V,V),U) = 0,\] which implies
\[(B(V,V),U)=-2(B(U,V),V),\] as required.\\ \qed 
\end{comment}

\begin{lemma}\label{Lemma:Prop1.3}
Suppose that Properties~\ref{Prop:Finite}.1,~\ref{Prop:Finite}.2 and \ref{Prop:Finite}.4 are satisfied. Then Property~\ref{Prop:Finite}.3  is equivalent to the following; there exists a constant \(b>0\) such that 
\begin{equation}\label{eq:New_prop}
2|(B(U,V),V)|\leq b|PV||U||V|.
\end{equation}
\end{lemma}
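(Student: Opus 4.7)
The plan is to prove the two directions separately, using the identity from Lemma~\ref{Lemma:Pre_Prop1.3} as the main bridge between $(B(U,V),V)$ and $(B(V,V),U)$, combined with the orthogonal decomposition $V = PV + QV$.

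For the forward direction (Property~\ref{Prop:Finite}.3 $\Rightarrow$ the inequality~\cref{eq:New_prop}), I would start from the identity
\[2(B(U,V),V) = -(B(V,V),U),\]
given by Lemma~\ref{Lemma:Pre_Prop1.3}, and expand $B(V,V)$ using bilinearity and symmetry (Property~\ref{Prop:Finite}.1) as
\[B(V,V) = B(PV,PV) + 2B(PV,QV) + B(QV,QV).\]
By Property~\ref{Prop:Finite}.3 the last term vanishes, so $B(V,V) = B(PV,PV) + 2B(PV,QV)$. Applying Property~\ref{Prop:Finite}.4 to each summand yields $|B(V,V)| \le a_1|PV|^2 + 2a_1|PV||QV| \le 3a_1|PV||V|$, using $|PV|,|QV|\le|V|$. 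Taking inner product with $U$ and invoking Cauchy--Schwarz gives the desired bound with $b=3a_1$.

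For the reverse direction, assume \cref{eq:New_prop} and, for arbitrary $W\in\mathbf{H}$, substitute $V = QW$. Then $PV = PQW = 0$, so the right-hand side is zero, forcing $(B(U,QW),QW)=0$ for every $U$. Using Lemma~\ref{Lemma:Pre_Prop1.3} once more, $(B(QW,QW),U) = -2(B(U,QW),QW) = 0$ for all $U$, and nondegeneracy of the inner product yields $B(QW,QW) = 0$, i.e.\ Property~\ref{Prop:Finite}.3.

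No step looks particularly delicate — the main content is simply remembering to cancel the $B(QV,QV)$ piece after expansion, which is exactly where Property~\ref{Prop:Finite}.3 enters, and to recognize that Lemma~\ref{Lemma:Pre_Prop1.3} gives us a clean two-way bridge so that the converse can be read off by specializing to $V\in \operatorname{im}(Q)$. No extra hypotheses beyond Properties~\ref{Prop:Finite}.1,~\ref{Prop:Finite}.2,~\ref{Prop:Finite}.4 are needed, consistent with the statement.
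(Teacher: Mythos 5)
Your proposal is correct and follows essentially the same route as the paper's proof: both directions use Lemma~\ref{Lemma:Pre_Prop1.3} as the bridge, the forward direction expands $B(V,V)$ over $PV+QV$, kills the $B(QV,QV)$ term via Property~\ref{Prop:Finite}.3, and bounds the rest with Property~\ref{Prop:Finite}.4 to get $b=3a_1$, while the converse specializes to $V\in\operatorname{im}(Q)$ and uses nondegeneracy of the inner product. The only cosmetic difference is that you bound $|B(V,V)|$ before pairing with $U$, whereas the paper bounds the inner products termwise.
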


\begin{proof} By \cref{Lemma:Pre_Prop1.3}, \(2|(B(U,V),V)|=|(B(V,V),U)|\). Note that 
\begin{align*}
(B(V,V),U) &= (B(PV+QV,PV+QV),U)\\ &= 2(B(PV,QV),U)+(B(PV,PV),U),
\end{align*}
where we have used Property~\ref{Prop:Finite}.3. Therefore by Property~\ref{Prop:Finite}.4,
\begin{align*}
|(B(V,V),U)| &\leq 2a_1|PV||QV||U|+a_1|PV|^2|U|\\
&=a_1|PV||U|(2|QV|+|PV|)\\
&\leq 3a_1|PV||U||V|,
\end{align*}
as required with \(b=3a_1\).

Conversely, suppose that Inequality~\cref{eq:New_prop} holds. Then \[|B(QV,QV),U)|\leq b|PQV||QV||U| = 0\] since $ |PQV|=0 $. As this holds for all \(U \in \mathbf{H}\) we get that \(B(QV,QV)=0\) for all \(V \in \mathbf{H}.\)
\end{proof}

In the next several lemmas we show that if Property~\ref{Prop:Finite} holds, then ODEs of the form~\cref{eq:ODE} satisfy Assumptions~\ref{as:one} to~\ref{as:five}, and consequently \cref{th:Main,th:Main2} hold. 

We start with showing that Properties~\ref{Prop:Finite}.2 and~\ref{Prop:Finite}.5 imply Assumptions~\ref{as:one}~and~\ref{as:two}.
\begin{lemma}\label{Lemma:Ass12}
Let \(U\) be the solution of a finite dimensional ODE as defined by \cref{eq:ODE} and suppose that Properties~\ref{Prop:Finite}.2~and~\ref{Prop:Finite}.5 are satisfied. Then Assumption~\ref{as:one} holds for any \(K > |f|^2\) and Assumption~\ref{as:two} for \(\rho_n\) as given in \cref{cor:apriori} with \(c_1 = 1\) and \(c_2 = |f|^2\).
\end{lemma}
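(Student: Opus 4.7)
The plan is to derive the abstract dissipation inequality $|U(t)|^2 \le e^{-(t-s)}|U(s)|^2 + |f|^2$ of Lemma~\ref{lemma:apriori} with $c_1=1$ and $c_2=|f|^2$ by a standard energy estimate, and then read off the two assumptions. Concretely, I would take the inner product of Equation~\cref{eq:ODE} with $U$ to obtain
\[
\tfrac{1}{2}\tfrac{d}{dt}|U|^2 + (AU,U) + (B(U,U),U) = (f,U).
\]
Property~\ref{Prop:Finite}.2 kills the nonlinear term, Property~\ref{Prop:Finite}.5 gives $(AU,U)\ge |U|^2$, and Cauchy--Schwarz together with Young's inequality applied to $(f,U)$ in the form $|f||U|\le \tfrac{1}{2}|U|^2 + \tfrac{1}{2}|f|^2$ yields the differential inequality $\tfrac{d}{dt}|U|^2 \le -|U|^2 + |f|^2$.

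Multiplying by $e^t$ and integrating from $s$ to $t$ turns this into
\[
|U(t)|^2 \le e^{-(t-s)}|U(s)|^2 + |f|^2\bigl(1-e^{-(t-s)}\bigr) \le e^{-(t-s)}|U(s)|^2 + |f|^2,
\]
which is precisely the hypothesis of Lemma~\ref{lemma:apriori} with $c_1=1$ and $c_2=|f|^2$. Hence Corollary~\ref{cor:apriori} applies verbatim, producing the stationary, a.s.\ finite process $\rho_n$ that dominates $|\delta_n|^2$, which is exactly Assumption~\ref{as:two}.

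For Assumption~\ref{as:one}, I would use the sharper form of the estimate above. Forward invariance of $\mathscr{B}=\{U:|U|^2\le K\}$ for any $K>|f|^2$ follows from
\[
|U(t)|^2 \le e^{-(t-s)}|U(s)|^2 + |f|^2\bigl(1-e^{-(t-s)}\bigr) \le e^{-(t-s)}K + K\bigl(1-e^{-(t-s)}\bigr) = K,
\]
whenever $|U(s)|^2\le K$. For the absorbing property, for any initial condition $U_0$ with $|U_0|^2=R$, the bound $|U(t)|^2 \le e^{-t}R + |f|^2$ shows that $|U(t)|^2$ enters $\mathscr{B}$ after a finite time (uniform over bounded sets of initial data), and by the just-established forward invariance it remains there.

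I do not expect any real obstacle: the whole argument is a textbook energy estimate, and the only subtle point is to be careful that the Young constant is chosen so as to leave a full factor $-|U|^2$ on the right after absorbing half of $(AU,U)$, so that $c_1=1$ comes out cleanly and $c_2$ is exactly $|f|^2$ (rather than $|f|^2/2$ or a larger multiple). Once the differential inequality is in hand, Assumption~\ref{as:two} is a direct invocation of Corollary~\ref{cor:apriori} and Assumption~\ref{as:one} is a two-line consequence of the integrated bound.
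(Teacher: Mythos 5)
Your proof is correct and follows essentially the same route as the paper: the same energy estimate using Properties~\ref{Prop:Finite}.2 and~\ref{Prop:Finite}.5, the same application of Young's inequality giving $\frac{d}{dt}|U|^2 + |U|^2 \le |f|^2$, Gronwall to obtain $|U(t)|^2 \le e^{-(t-s)}|U(s)|^2 + |f|^2(1-e^{-(t-s)})$, and then reading off absorption, forward invariance, and the hypotheses of Corollary~\ref{cor:apriori} with $c_1=1$, $c_2=|f|^2$. Your explicit convex-combination argument for forward invariance (using the sharper $|f|^2(1-e^{-(t-s)})$ form rather than the weaker $+|f|^2$ bound) is exactly the point the paper relies on implicitly, so no gap remains.
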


\begin{proof}
The absorbing ball property is easily verified. Take the inner product of ODE~\cref{eq:ODE} with \(U\) and use Property~\ref{Prop:Finite}.2 and Property~\ref{Prop:Finite}.5 to get
\[\frac{1}{2}\frac{d|U|^2}{dt}+|U|^2\leq (f,U). \] 
Then, by the Cauchy-Schwarz and Young's inequality we obtain
\[\frac{1}{2}\frac{d|U|^2}{dt}+|U|^2 \leq |(f,U)| \leq |f||U| \leq \frac{1}{2}|f|^2 + \frac{1}{2}|U|^2,\] and hence,
\[\frac{d|U|^2}{dt}+|U|^2\leq |f|^2. \]

Assumption~\ref{as:one} follows from using Gronwall's lemma;

\begin{equation}\label{eq:dissipF}
|U(t)|^2 \leq |U(0)|^2 e^{-t}+|f|^2(1-e^{-t}).
\end{equation}

We see that any ball \(B(0,K^{1/2})\) with \(K>|f|^2\) is absorbing and forward invariant. Furthermore, Inequality~\cref{eq:dissipF}  implies that the conditions of \cref{cor:apriori} are satisfied with \(c_1 = 1\) and \(c_2 = |f|^2\) and hence Assumption~\ref{as:two} (a priori bound) holds.
\end{proof}

Before proceeding to the next lemmas we derive an equation for the error \(\delta~=~U~-~u\). Since the approximating solution \(u\) satisfies Equation~\cref{eq:ODE} in the interval \([t_n,t_{n+1})\), we have that
\begin{equation}\label{eq:error_eq}
\frac{d\delta}{dt} + A \delta + 2B(U,\delta)- B(\delta, \delta) = 0,
\end{equation}
where we have used the bilinearity and symmetry of \(B\) to derive the above.
 
In the next Lemma we show that Assumption~\ref{as:five} holds (Eq. \cref{eq:er_bound}), and we derive a bound on \(|P\delta|\) (Eq. \cref{eq:Pdelta_bound}) which is used in \cref{Lemma:ass3} to show that Assumption~\ref{as:three} holds. The bound on \(|P\delta|\) and its proof are similar to that of the bound obtained in \cite{Sanz-Alonso2014}, Lemma 5.3, but with an important difference. If we were to simply replace the bound on \(|\delta_0|\) (given by \(r'^2\) in that paper) by our a priori bound \(\rho_n\), we would have a term multiplying \(|\delta|^2\) that in the limit \(h \to 0\) tends to a constant (see \cref{Lemma:apriori}). In our bound \cref{eq:Pdelta_bound}, the a priori bound appears in lower order, \(\rho_n^{1/2}\). This means that in the limit, this term goes to zero, which, in turn, enables us to show in \cref{Lemma:Ass4}, that there is a \(h\) for which the squeezing holds in expectation, as required by Assumption~\ref{as:four}.

\begin{lemma}\label{Lemma:bounds}
Assume that Properties~\ref{Prop:Finite}.1,~\ref{Prop:Finite}.2,~\ref{Prop:Finite}.4 and~\ref{Prop:Finite}.5 hold. Let \(U\) be a solution to ODE~\cref{eq:ODE} contained in the invariant set \(\mathscr{B} = B(0,K^{1/2})\). Then  \(\delta(t) = U(t) - u(t)\) satisfies 
\begin{equation}\label{eq:er_bound}
|\delta(t)|^2 \leq |\delta_n|^2e^{\kappa(t-t_n)},
\end{equation}
and 
\begin{equation}\label{eq:Pdelta_bound}
|P\delta|^2\leq |\delta_n|^2(a_4+ a_5\rho_n^{1/2})(t-t_n)+|P\delta(t_n)|^2,
\end{equation}
for \(t \in [t_n,t_{n+1})\), \(n \in \mathbb{N}_0\), \(\kappa = 2(2a_1K^{1/2}-1)\), \(a_4 =2e^{\kappa h}(\frac{a_1^2}{a_2} K+a_3) \), \(a_5 =2a_1 e^{3\kappa h/2}, \) and \(\rho_n\) is as in \cref{Lemma:Ass12}.
\end{lemma}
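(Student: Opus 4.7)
The plan is to prove the two inequalities by taking suitable inner products of the error equation~\cref{eq:error_eq} with $\delta$ and with $P\delta$ respectively, and then applying Gronwall-type arguments together with Properties~\ref{Prop:Finite}.1--\ref{Prop:Finite}.5.

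For the first estimate \cref{eq:er_bound}, I would take the inner product of Equation~\cref{eq:error_eq} with $\delta$. Property~\ref{Prop:Finite}.2 eliminates the $(B(\delta,\delta),\delta)$ term, Property~\ref{Prop:Finite}.5 yields $(A\delta,\delta)\geq |\delta|^2$, and Property~\ref{Prop:Finite}.4 together with $U\in\mathscr{B}$ gives $|(B(U,\delta),\delta)|\leq a_1 K^{1/2}|\delta|^2$. Combining these yields
\[\tfrac{1}{2}\tfrac{d}{dt}|\delta|^2 \leq (2a_1 K^{1/2}-1)|\delta|^2,\]
and Gronwall's lemma on $[t_n,t]$ produces \cref{eq:er_bound} with $\kappa = 2(2a_1 K^{1/2}-1)$.

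For the second estimate \cref{eq:Pdelta_bound}, I would take the inner product with $P\delta$. Using orthogonality of $P$, the left side yields $\tfrac{1}{2}\tfrac{d}{dt}|P\delta|^2$. Remark 3 (Inequality~\cref{eq:Property6}) handles the linear term via $(A\delta,P\delta)\geq a_2|P\delta|^2 - a_3|\delta|^2$, while Property~\ref{Prop:Finite}.4 bounds the bilinear terms by $2a_1|U||\delta||P\delta| + a_1|\delta|^2|P\delta|$. Applying Young's inequality $2a_1|U||\delta||P\delta|\leq a_2|P\delta|^2 + \tfrac{a_1^2}{a_2}|U|^2|\delta|^2$ absorbs the first bilinear term into the coercive term, while the second bilinear term is controlled by $a_1|\delta|^2|P\delta|\leq a_1|\delta|^3$ using $|P\delta|\leq|\delta|$. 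The cubic is then split as $|\delta|^3 = |\delta|^2\cdot|\delta|$, where the first factor is bounded by Inequality~\cref{eq:er_bound} (giving the exponential factor $e^{3\kappa h/2}$), and the crucial second factor $|\delta|\leq|\delta_n|\leq \rho_n^{1/2}$ produces the desired $\rho_n^{1/2}$ dependence rather than the naive $\rho_n$. After these estimates I obtain
\[\tfrac{d}{dt}|P\delta|^2 \leq |\delta_n|^2\bigl(a_4 + a_5\rho_n^{1/2}\bigr),\]
with $a_4$ and $a_5$ as stated, and integration over $[t_n,t]$ yields \cref{eq:Pdelta_bound}.

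The main subtlety is the treatment of the cubic term $|\delta|^3$: I must not simply use $|\delta|^2\leq \rho_n$ on the whole factor, because (as the authors note) this would give $\rho_n$ instead of $\rho_n^{1/2}$ in the final bound and break the subsequent expectation argument needed for Assumption~\ref{as:four}. The trick is to split the cube and use the already-proven bound~\cref{eq:er_bound} on one of the quadratic pieces, keeping only a single square root of $\rho_n$. Everything else is routine use of Young's inequality and the projection identities.
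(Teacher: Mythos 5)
Your proof is correct and follows essentially the same route as the paper: inner products of the error equation with \(\delta\) and with \(P\delta\), Young's inequality to absorb the \(a_2|P\delta|^2\) term, and the key step of applying \cref{eq:er_bound} to the cubic term so that only a single factor \(|\delta_n|\leq\rho_n^{1/2}\) is replaced by the a priori bound, followed by direct integration over \([t_n,t]\). One small imprecision: \(|\delta(t)|\leq|\delta_n|\) does not hold pointwise (only \(|\delta(t)|\leq|\delta_n|e^{\kappa(t-t_n)/2}\)), but since you have already collected the full exponential \(e^{3\kappa h/2}\) from applying \cref{eq:er_bound} to all three powers of \(|\delta|\), your stated constants \(a_4\) and \(a_5\) come out exactly as in the lemma.
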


\begin{comment}
The proof is very similar to that in~\cite{Hayden2011}, Lemma 2.4 for the Lorenz~'63 model. We take the inner product of the error Equation \eqref{eq:error_eq} with \(\delta\) and use Properties \ref{Prop:Finite}.2, \ref{Prop:Finite}.4 and \ref{Prop:Finite}.5 as well as Gronwall's inequality.   
\end{comment}

\begin{comment}
\textbf{Proof of (32):}
Take the inner product of the error Equation~\eqref{eq:error_eq} with \(\delta\) and employ Properties~\ref{Prop:Finite}.2 and~\ref{Prop:Finite}.5 to obtain
   \[\frac{1}{2}\frac{d|\delta|^2}{dt} + |\delta|^2 \leq 2|(B(U,\delta),\delta)|.\]
   
By Cauchy-Schwartz and Property~\ref{Prop:Finite}.4 we get
\begin{equation*}
\frac{1}{2}\frac{d|\delta|^2}{dt} + |\delta|^2 \leq 2a_1|U||\delta|^2 \leq 2a_1K^{1/2}|\delta|^2.
\end{equation*}
Finally by Gronwall's inequality,
\[|\delta(t)|^2\leq e^{\kappa(t-t_n)}|\delta_n|^2,\] where \(\kappa = 2(2a_1K^{1/2}-1)\).  \qed 
\end{comment}
\textbf{Outline of Proof:}
Proof of \cref{eq:er_bound} is straightforward and similar to the proof given for the Lorenz system in \cite{Hayden2011}, so we omit it for brevity.

Proof of \cref{eq:Pdelta_bound}; Taking inner product of the error Equation~\cref{eq:error_eq} with \(P\delta\) and applying Inequality~\cref{eq:Property6}, we get
\[\frac{1}{2}\frac{d|P\delta|^2}{dt} + a_2|P\delta|^2 - a_3|\delta|^2 + 2(B(U,\delta),P\delta)-(B(\delta,\delta),P\delta)\leq0.\]

Inequality \cref{eq:Pdelta_bound} is obtained by applying Cauchy-Schwarz, Property~\ref{Prop:Finite}.4, Inequality \cref{eq:er_bound}, Young's and the a priori bound, which holds by \cref{Lemma:Ass12}, to the above and then applying Gronwall's lemma.

\begin{comment}
\off
By Cauchy-Schwarz, Property~\ref{Prop:Finite}.4 and Young's,
\begin{align*}
\frac{1}{2}\frac{d|P\delta|^2}{dt} + a_2|P\delta|^2 &\leq 2a_1|U||\delta||P\delta|+ a_1|\delta|^3+a_3|\delta|^2,\\
& \leq (\frac{4a_1^2}{4a_2} K+a_3)|\delta|^2+ \frac{2a_2}{2}|P\delta|^2+  a_1|\delta|^3.
\end{align*}
Hence by \eqref{eq:er_bound},
\[\frac{d|P\delta|^2}{dt} \leq 2(\frac{a_1^2}{a_2} K+a_3) |\delta_n|^2e^{\kappa(t-t_n)}+ 2a_1|\delta_n|^3e^{3\kappa(t-t_n)/2}.\]
Note that \(e^{\kappa(t-t_n)} < e^{\kappa h}\) for all \(t \in [t_n,t_{n+1} )\). Furthermore, by Lemma~\ref{Lemma:Ass12}, there exists a process \(\rho_n\) such that \(|\delta_n| \leq \rho_n^{1/2}\), which we use to replace in the $ |\delta_n|^3 $ in the second term of the RHS by the a priori bound.
Let \(a_4 =2e^{\kappa h}(\frac{a_1^2}{a_2} K+a_3) \) and \(a_5 =2a_1 e^{3\kappa h/2}, \) then we get
\begin{equation*}
\frac{d|P\delta|^2}{dt} \leq (a_4+ a_5\rho_n^{1/2})|\delta_n|^2.
\end{equation*}
Integrating from $ t_n $ to $t$ yields
\begin{equation*}
|P\delta|^2 \leq |\delta_n|^2(a_4+ a_5\rho_n^{1/2})(t-t_n)+|P\delta(t_n)|^2.
\end{equation*}
Recall that \(h, |\delta_n|^2\) and its a priori bound \(\rho_n\) are fixed at the beginning of the interval \([t_n,t_{n+1})\) and do not vary within it.
\off
\end{comment} 

The next lemma shows that Assumption~\ref{as:three} holds.
\begin{lemma}\label{Lemma:ass3}
Let \(U \in \mathscr{B} = B(0,K^{1/2})\) be a solution to ODE~\cref{eq:ODE}, satisfying Properties~\ref{Prop:Finite}.1 to~\ref{Prop:Finite}.5 with \(\delta(t)\) as defined by Equation~\cref{eq:error}, then there exist continuous functions \(M\) \(:\mathbb{R^{+}} \times \mathbb{R^{+}} \rightarrow \mathbb{R^{+}}\) and and \(\gamma\)\(:\mathbb{R^{+}} \rightarrow \mathbb{R^{+}}\) such that
\[|\delta(t)|^2 \leq M(t-t_n, \rho_n)|\delta_n|^2 + \gamma(t-t_n)|P\delta_n|^2,\] 
 for \(t\geq t_n\), where \[M(\tau, \rho_n) = e^{-\tau}(1 + a_6\int_0^{\tau}(a_4+ a_5\rho_n^{1/2})e^s s ds) \] and \[\gamma(\tau)= a_6(1-e^{-\tau}),\] with \(a_6= b^2K\).
\end{lemma}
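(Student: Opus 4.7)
The plan is to derive a differential inequality for $|\delta(t)|^2$ from the error equation \cref{eq:error_eq} and then integrate it using the pointwise bound on $|P\delta|^2$ obtained in \cref{Lemma:bounds}. This mirrors the structure of the proof of \cref{Lemma:bounds} itself but tests the error equation against $\delta$ rather than against $P\delta$.

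First I take the $\mathbf{H}$-inner product of \cref{eq:error_eq} with $\delta$. By Property~\ref{Prop:Finite}.2 the cubic term $(B(\delta,\delta),\delta)$ vanishes, and by Property~\ref{Prop:Finite}.5 the linear term gives $(A\delta,\delta)\geq|\delta|^2$, so
\[\tfrac{1}{2}\tfrac{d}{dt}|\delta|^2 + |\delta|^2 \leq 2|(B(U,\delta),\delta)|.\]
Next I invoke the key estimate from \cref{Lemma:Prop1.3}, namely $2|(B(U,\delta),\delta)| \leq b|P\delta||U||\delta|$, together with $|U|\leq K^{1/2}$ since $U\in\mathscr{B}$. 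Applying Young's inequality in the form $bK^{1/2}|P\delta||\delta| \leq \tfrac{1}{2}|\delta|^2 + \tfrac{1}{2}b^2K|P\delta|^2$ and rearranging produces
\[\tfrac{d}{dt}|\delta|^2 + |\delta|^2 \leq a_6|P\delta|^2,\qquad a_6 := b^2K.\]

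Now I insert the bound \cref{eq:Pdelta_bound} from \cref{Lemma:bounds}, which on $[t_n,t_{n+1})$ reads $|P\delta|^2 \leq |\delta_n|^2(a_4+a_5\rho_n^{1/2})(t-t_n)+|P\delta_n|^2$. Writing $\tau=t-t_n$, the differential inequality becomes linear in $|\delta|^2$:
\[\tfrac{d}{dt}|\delta|^2 + |\delta|^2 \leq a_6(a_4+a_5\rho_n^{1/2})\tau|\delta_n|^2 + a_6|P\delta_n|^2.\]
Using the integrating factor $e^t$, integrating from $t_n$ to $t$, and changing variables $s\mapsto s-t_n$ in the resulting integral yields exactly
\[|\delta(t)|^2 \leq e^{-\tau}\Bigl(1+a_6\!\int_0^{\tau}(a_4+a_5\rho_n^{1/2})e^s s\,ds\Bigr)|\delta_n|^2 + a_6(1-e^{-\tau})|P\delta_n|^2,\]
which is the claim, with $M(\tau,\rho_n)$ and $\gamma(\tau)$ as stated. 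Continuity in $(\tau,\rho_n)$ follows from the elementary form of the integrand.

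The only subtle point I expect is bookkeeping of the constants through Young's inequality so that the coefficient of $|P\delta|^2$ comes out to exactly $a_6=b^2K$; everything else is a direct computation using tools already assembled. Note that $\rho_n^{1/2}$ rather than $\rho_n$ appears thanks to the sharper $a$ priori dependence isolated in \cref{eq:Pdelta_bound}, which will be crucial later in \cref{Lemma:Ass4} for verifying $\mathbb{E}M<1$ for small $h$.
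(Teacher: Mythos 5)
Your proof is correct and follows essentially the same route as the paper's: take the inner product of the error equation with \(\delta\), use Properties~\ref{Prop:Finite}.2 and~\ref{Prop:Finite}.5 together with \cref{Lemma:Prop1.3} and Young's inequality to arrive at \(\frac{d}{dt}|\delta|^2+|\delta|^2\leq b^2K|P\delta|^2\), then insert the bound~\cref{eq:Pdelta_bound} and integrate with the factor \(e^{t-t_n}\). The constant bookkeeping through Young's inequality and the final integration both check out, so nothing is missing.
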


\begin{proof}
Taking inner product of error Equation~\cref{eq:error_eq} with \(\delta\) and using Properties~\ref{Prop:Finite}.5 and~\ref{Prop:Finite}.2 we get
\begin{equation*}
\frac{1}{2}\frac{d|\delta|^2}{dt}+ |\delta|^2 \leq  2|(B(U,\delta),\delta)|.
\end{equation*}
Note that \(|U|\leq K^{1/2}\). Using \cref{Lemma:Prop1.3} and then Young's, we obtain
\begin{equation*}\label{eq:delta_bound}
\frac{1}{2}\frac{d|\delta|^2}{dt} + |\delta|^2 \leq  |\delta|^2/2 + b^2K|P\delta|^2/2,
\end{equation*}
and hence 
\begin{equation}\label{eq:L12}
\frac{d|\delta|^2}{dt} + |\delta|^2 \leq b^2K|P\delta|^2.
\end{equation}
We use the bound \cref{eq:Pdelta_bound} on \(|P\delta|^2\) from \cref{Lemma:bounds} and replace in above inequality to obtain
 \[\frac{d|\delta|^2}{dt} + |\delta|^2 \leq b^2K\Big(|\delta_n|^2(a_4+ a_5\rho_n^{1/2})(t-t_n)+|P\delta(t_n)|^2\Big).\] Multiplying by the integrating factor \(e^{t-t_n}\) and using Gronwall we get
\begin{align*}
|\delta|^2 &\leq |\delta_n|^2M_n(t-t_n,\rho_n) + |P\delta_n|^2\gamma(t-t_n),
\end{align*} 
\begin{comment}
\begin{align*}
|\delta|^2 &\leq |\delta_n|^2e^{-(t-t_n)} + b^2Ke^{-(t-t_n)}\int_{t_n}^{t}e^{(s-t_n)}\Big(|\delta_n|^2(a_4+ a_5\rho_n^{1/2})(s-t_n)+|P\delta(t_n)|^2\Big)ds,\\
&=|\delta_n|^2M_n(t-t_n,\rho_n) + |P\delta_n|^2\gamma(t-t_n),
\end{align*} 
\end{comment}
where \[M_n(\tau):=M(\tau, \rho_n) = e^{-\tau}(1 + a_6\int_0^{\tau}(a_4+ a_5\rho_n^{1/2})e^s s ds) \] and \[\gamma(\tau)= a_6(1-e^{-\tau}),\] with \(a_6= b^2K\). Since \(\rho_n\) is continuous w.r.t. \(\tau\) for all \(\tau >0\), so are \(M_n\) for a.e.\ \(\omega\).  \end{proof}

We note that in this case the \(\gamma_n\) are all the same, non-random and finite for all \(\tau \geq 0\). Therefore Assumption~\ref{as:four} is satisfied if the following lemma holds.
\begin{lemma}\label{Lemma:Ass4}
There exists  \(\tau^* > 0\) such that  \(\mathbb{E}M_n(\tau) < 1\) and \(\mathbb{E}\gamma_n(\tau) < \infty\) for all \(\tau \in (0,\tau^*]\).
\end{lemma}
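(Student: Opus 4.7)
The plan is to show the two required inequalities separately, the second being trivial and the first following from a careful estimate of the small-$\tau$ behaviour of $\mathbb{E}M_n(\tau)$.

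First I would dispose of $\mathbb{E}\gamma_n(\tau) < \infty$. Since $\gamma(\tau) = a_6(1-e^{-\tau})$ is deterministic and bounded by $a_6$ uniformly in $\tau > 0$, the expectation equals $\gamma(\tau)$ itself and is always finite (in fact $< a_6$). So the content of the lemma lies entirely in the bound on $\mathbb{E}M_n(\tau)$.

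For $M_n(\tau)$, since the factor $a_4 + a_5 \rho_n^{1/2}$ is independent of the integration variable $s$, I would pull it outside the integral to write
\begin{equation*}
M_n(\tau) = e^{-\tau} + a_6 e^{-\tau}\bigl(a_4 + a_5 \rho_n^{1/2}\bigr) I(\tau), \qquad I(\tau) := \int_0^{\tau} s e^{s}\, ds = (\tau-1)e^{\tau}+1.
\end{equation*}
An elementary Taylor expansion gives $I(\tau) = \tau^2/2 + O(\tau^3)$ as $\tau \to 0^+$. Taking expectations and applying Jensen's inequality to move the square root outside, I would obtain
\begin{equation*}
\mathbb{E}M_n(\tau) \leq e^{-\tau} + a_6 e^{-\tau} \bigl(a_4 + a_5 (\mathbb{E}\rho_n)^{1/2}\bigr) I(\tau).
\end{equation*}

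Now I would use the explicit expectation bound from Equation~\cref{eq:exp_rho}, applied with $h = \tau$: $\mathbb{E}\rho_n = \bar{K} + F(\tau) + 4\sigma^2/(1-e^{-c_1\tau})$. Each of the $\tau$-dependent terms blows up like $1/\tau$, so $\mathbb{E}\rho_n = \mathcal{O}(1/\tau)$ and hence $(\mathbb{E}\rho_n)^{1/2} = \mathcal{O}(\tau^{-1/2})$. Combining this with $I(\tau) = \mathcal{O}(\tau^2)$ produces the crucial estimate
\begin{equation*}
a_5 (\mathbb{E}\rho_n)^{1/2} I(\tau) = \mathcal{O}(\tau^{3/2}), \qquad a_4 I(\tau) = \mathcal{O}(\tau^2),
\end{equation*}
so that $\mathbb{E}M_n(\tau) = e^{-\tau} + \mathcal{O}(\tau^{3/2}) = 1 - \tau + \mathcal{O}(\tau^{3/2})$. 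This is strictly less than $1$ for all sufficiently small $\tau$, which gives the desired $\tau^*$.

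The one place that requires real care is the observation highlighted in the remark preceding \cref{Lemma:bounds}: the bound on $|P\delta|^2$ contains $\rho_n$ only through $\rho_n^{1/2}$, which is precisely what keeps the stochastic contribution at order $\tau^{3/2}$ rather than $\tau$. If $\rho_n$ had appeared linearly in $M_n$, the singular behaviour $\mathbb{E}\rho_n \sim 1/\tau$ combined with $I(\tau) \sim \tau^2$ would give only an $\mathcal{O}(\tau)$ correction, which competes with (rather than is dominated by) the $-\tau$ from $e^{-\tau}-1$, and the argument could fail. Thus the main technical point is not the expansion itself but appealing to the right a priori structure from \cref{Lemma:bounds}.
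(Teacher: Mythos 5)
Your proof is correct and follows essentially the same route as the paper's: both arguments reduce to showing that the stochastic contribution to $\mathbb{E}M_n(\tau)$ is $o(\tau)$ near $\tau=0$, via Jensen's inequality and the explicit formula for $\mathbb{E}(\rho_n)$ in \cref{eq:exp_rho} to get $\mathbb{E}(\rho_n^{1/2})\tau^{1/2}$ bounded, so that $\mathbb{E}M_n(\tau)\le 1-\tau+\mathcal{O}(\tau^{3/2})<1$ for small $\tau$. The only cosmetic difference is that you make the $\mathcal{O}(\tau^{3/2})$ Taylor expansion explicit, whereas the paper packages the same estimate as $\overline{m}(0)=1$, $\overline{m}'(0)=-1$; your closing remark about the essential role of the $\rho_n^{1/2}$ (rather than $\rho_n$) dependence matches the paper's own discussion preceding \cref{Lemma:bounds}.
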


\begin{proof} We wish to show that the function \[m(\tau) = \mathbb{E}M_n(\tau) =  e^{-\tau}(1 + a_6\int_0^{\tau}(a_4+ a_5\mathbb{E}(\rho_n^{1/2}))e^s s ds)\] 
is less than 1 in some neighbourhood around 0. The a priori bound  \(\rho_n\), and consequently \(M_n\), is not well defined at zero. However, we will show that \(\mathbb{E}(\rho_n(\tau)^{1/2})\tau^{1/2}\) is finite in a neighbourhood around \(\tau=0\), that is, \(\mathbb{E}(\rho_n(\tau)^{1/2})s^{1/2} < B\) for some constant \(B>0\), for all \(s\leq \tau\)  and \(\tau\) sufficiently small.

Supposing the above holds, we have that in this neighbourhood
\begin{equation} \label{eq:m}
m(\tau) \leq e^{-\tau}(1 + a_6\int_0^{\tau}(a_4s^{1/2}+a_5B)s^{1/2}e^s ds):= \overline{m}(\tau),
\end{equation}
which implies that
\[m(0) = \lim_{\tau \to 0} \mathbb{E}M_n(\tau) \leq \lim_{\tau \to 0}\overline{m}(\tau) = 1.\]
Furthermore, \[\frac{d\overline{m}(\tau)}{d\tau} = -\overline{m}(\tau)+a_6(a_5\tau^{1/2} + a_4B)e^{\tau}\tau^{1/2},\] and hence \[\frac{d\overline{m}(0)}{d\tau} = -1.\]
Therefore, there exists a \(\tau^*\) such that \(\overline{m}(\tau)<1\) for all \(0< \tau \leq \tau^*\). Hence by the bound in~\cref{eq:m} the same is true of \(m(\tau)\), for sufficiently small \(\tau\). 

It remains to show that \( \mathbb{E}(\rho_n^{1/2})\tau^{1/2} =\mathbb{E}((\rho_n\tau)^{1/2}) \) is bounded in a neighbourhood around \(\tau=0\).  Recall that
\begin{align*}
\rho_n & = \bar{K}+\frac{6|f|^2}{1-e^{-\tau}} + \frac{4|U(t_0)|}{\tau}+4\sigma^2\sum_{k=0}^{\infty}e^{-k\tau}|R_{n-k}|^2.
\end{align*}
Therefore,
\begin{align*}
\mathbb{E}((\rho_n\tau)^{1/2})& \leq \mathbb{E}(\rho_n\tau)^{1/2}\\
&=\Big(\bar{K}\tau+\frac{6|f|^2}{1-e^{-\tau}}\tau+4|U(t_0)|^2+4\sigma^2\tau\sum_{k=0}^{\infty}e^{-k\tau}\Big)^{1/2}\\
&=\Big(\bar{K}\tau+4|U(t_0)|^2+\frac{6|f|^2+4\sigma^2}{1-e^{-\tau}}\tau\Big)^{1/2}.
\end{align*} This bound is continuous at 0, and the limit is
\[\lim_{\tau \to 0}\mathbb{E}((\rho_n\tau)^{1/2})\leq (4|U(t_0)|^2+6|f|^2+4\sigma^2)^{1/2},\] which is finite. 
\end{proof}

Before turning to the N-S equations we will analyse two well known finite dimensional systems, known as Lorenz~'63 and~'96, that are commonly used as model problems for data assimilation. 
\subsection{Lorenz '63 model}\label{ss:L63}
The Lorenz~'63 model consists of a system of three coupled ODEs, obtained from the N-S equations by truncation of the Fourier series to the first three modes \cite{Lorenz1963,TemamBook}. It is given by
\[
\begin{cases}
\dot{U_1} = -\alpha U_1 + \alpha U_2,\\
\dot{U_2} = -\alpha U_1 - U_2 - U_1U_3,\\
\dot{U_3} = -bU_3 + U_1U_2 - b(r+\alpha) ,
\end{cases}
\]

where the parameters \(b, r, \alpha \geq 0\) are real constants with standard values of \(b = 10, r=8/3, \alpha = 28\).

We can write this system in the form of ODE~\cref{eq:ODE}, (see e.g.~\cite{FoiasJolly2001}), where \[A = \begin{pmatrix} \alpha & -\alpha & 0 \\ \alpha & 1 & 0 \\ 0 & 0 & b \end{pmatrix}, B(U,\bar{U}) = \begin{pmatrix} 0 \\ (U_1\bar{U_3}+U_3\bar{U_1})/2 \\ -(U_1\bar{U_2}+U_2\bar{U_1})/2 \end{pmatrix}, f=\begin{pmatrix} 0 \\ 0 \\ - b(r+\alpha) \end{pmatrix}.\] 
The standard observation operator \(P\) is the projection onto the \(U_1\) subspace. With this operator \(P\) all items of Property~\ref{Prop:Finite} are easily verified. Furthermore, we have \[(B(U,V),PW)=0\] for all \(U,V,W \in \mathbb{R}^3\), meaning that the nonlinear part of the flow is always perpendicular to the observations.

This last property is specific to Lorenz~'63; it does not hold for Lorenz~'96 or \\ N~-~S. It means that we can have a much simplified estimate for \(|P\delta|^2\), since taking inner product of the error Equation~\cref{eq:error_eq} and \(P\delta\) and applying Inequality~\cref{eq:Property6} now yields;
\[\frac{d|P\delta|^2}{dt} + 2a_2|P\delta|^2 \leq 2a_3|\delta_n|^2e^{\kappa(t-t_n)}\leq 2a_3|\delta_n|^2e^{\kappa h}.\] Setting \(a_7 = 2a_3e^{\kappa h}\), the estimate~\cref{eq:Pdelta_bound} on \(|P\delta|^2\) is simplified to
\off \[ |P\delta(t)|^2\leq e^{-2a_2(t-t_n)}(\frac{a_7}{2a_2}|\delta_n|^2(e^{-2a_2(t-t_n)}-1) + |P\delta_n|^2). \] \off
We note that the stochastic \(\rho_n\) no longer appears. We follow the proof of \cref{Lemma:ass3} till Equation~\cref{eq:L12} and then use the simplified bound obtained above. Thus we get,
\[|\delta|^2 \leq |\delta_n|^2M(t-t_n) + \gamma(t-t_n)|P\delta_n|^2,\]
where
\off \[M(\tau)= e^{-\tau}(1 + a_8\int_0^{\tau}e^{s}-e^{(-2a_2+1)s} ds) \] \off  and \[\gamma(\tau)= b^2Ke^{-\tau}\int_0^{\tau}e^{(-2a_2+1)s} ds, \]
where \off \(a_8 = b^2K\frac{a_7}{2a_2}.\) \off

We can see that in the particular case of Lorenz~'63, we get a stronger result because \(M\) is deterministic and does not depend on the size of \(|\delta(t_n)|\). Consequently we just need to show that the non-random function \(M(\tau)<1\) for Assumption~\ref{as:four} to hold. This can readily be verified as \(M(0) = 1\) and \off \(M^{'}(\tau) = -M(\tau) + a_8(e^{\tau}-e^{-2a_2 \tau})+ \kappa e^{-\tau}a_8\int_0^{\tau}e^{s}-e^{(-2a_2+1)s} ds\)\off , so that \(M^{'}(0) = -1 < 0\). Therefore, there exists a \(\tau^{*}>0\) such that \(M(\tau)<1\) for all \(\tau < \tau^{*}\).  

In this case the \(C_n\) of \cref{th:Main} have a much simpler form. Choose some \(h \in (0,\tau^*)\). Let \(\zeta > 0\) be some constant such that \(M(h) < \zeta < 1\). We can replace \(M_k\) by the constant \(\zeta\) in Equation~\cref{eq:Meq} and get;
\begin{align*}
|\delta_{n}|^2 &\leq \sigma^2\sum_{l=0}^{n-1} \zeta^l \gamma|R_{n-l-1}|^2 +\zeta^n|QU_0-\eta|^2+\sigma^2|R_n|^2.
\end{align*}
 Therefore \[\limsup_{n}\Big(|\delta_{n}|^2-\sigma^2(\sum_{l=0}^{\infty}\zeta^l \gamma|R_{n-l-1}|^2+ |R_n|^2)\Big) \leq  0.\]
Hence, a possible form of \(C_n\) is \(C_n =\sum_{l=0}^{\infty}\zeta^l \gamma|R_{n-l-1}|^2+ |R_n|^2 \), which is a stationary process due to the assumptions on \(R_n\). Furthermore, \(\mathbb{E}(C_n) = \frac{1-\zeta+ \gamma}{1-\zeta} < \infty\). Therefore, since \(C_n \geq 0\), it is a.s.\ finite. 

We note also that \[\limsup_{n}\mathbb{E}|\delta_{n}|^2\leq \sigma^2\mathbb{E}(C_n) = \frac{\sigma^2(1-\zeta+\gamma)}{1-\zeta},\]
so that the long-term mean square of the error is proportional to the strength of the noise, since constants \(\zeta\) and \(\gamma\) are independent of the noise and only depend on the data assimilation interval \(h\).

The bounding process \(C_n\) gives little information in the limit \(h \to 0\), because then \(\zeta~\to~1\). The same problem arises using 3DVAR as shown by~\cite{Law2014}, however they also give numerical results showing that the accuracy of the filter is fortunately a lot better than the theoretical bound implies. Clearly the bounds we give are not sharp since we make a number of estimates along the way. The main problem with our analysis for small \(h\) is that we are always summing the squared magnitude of the observational error. If \(h\) is small enough however, the dynamics is close to the identity, which should lead to considerable cancellations between the propagated errors. This is not taken into account in our approach. 

We remark also that the above \(P\) is not the only observation projection that would allow for \cref{th:Main} to hold. Any such \(P\) would need to satisfy Property~\ref{Prop:Finite}.3. That is, \(B(QU,QU) = 0\), so that the image of \(Q\) is contained in the null space of \(B\). The null space of \(B\) is given by \(U_3U_1=0\) and \(U_1U_2 =0\) so that it is composed of the plane \(U_1 = 0\) and the line \(U_3 = U_2= 0\). This means that \(Q\) must project either onto the \((U_2,U_3)\)-plane or the \(U_1 \) subspace or the origin. Since \(P = I-Q\), \(P\) can project either onto the \((U_2,U_3)\)-plane or the \(U_1\) subspace, or the whole space (i.e.~P is the identity). We note that observing only the \(U_2\) or only the \(U_3\) subspace would not work.

\subsection{Lorenz '96 model}\label{ss:L96}

The Lorenz~'96 model~\cite{Lorenz1996} is given by
\[\frac{dU_i}{dt} = (U_{i+1} - U_{i-2})U_{i-1}-U_i + F,\] for \(i=1...N\), \(N=3M\), for some \(M \in \mathbb{N}\) with \(U_{-1} = U_{N-1}, U_0 = U_N\), \(U_{N+1} = U_1\) and \(F=8\).

As given in~\cite{Law2016}, in this model \(A\) is the \(N \times N\) identity matrix, \(f=(8,...,8)^T\) is an \(N\) dimensional vector, and the symmetric bilinear form is given by \[B(U,V)_i = -\frac{1}{2}((U_{i+1}- U_{i-2})V_{i-1}+(V_{i+1}- V_{i-2})U_{i-1}).\]
The projection operator \(P\) is produced by setting every third column of the identity matrix to 0. That is, \[P = (e_1, e_2,0, e_4,..., 0, e_{N-2}, e_{N-1}, 0).\]
With the above observation operator it has been shown, see~\cite{Law2014}, that Property~\ref{Prop:Finite} holds and that \(a_2 = 1\) and \(a_3 = 0\) since $ A $ is the identity matrix. Furthermore, we have that \(b = 6\) and \(a_1 = 2\).

In some ways the Lorenz~'96 model behaves more like the 2D Navier-Stokes, in that the equation for $ P $ is not as simple; Lorenz~'63 is in this sense exceptional. Thus, in the case of Lorenz~'96 we cannot easily deduce an explicit form for the process \(C_n\).
\section{Application to Navier Stokes}
\label{sec:NS}
In this section we show that the 2-D incompressible Navier-Stokes equations, with L-periodic boundary conditions, satisfy Assumptions~\ref{as:one} to~\ref{as:five}, and therefore that \cref{th:Main} and \cref{th:Main2} hold also for this model.

As we will see, the strategy for showing Assumptions~\ref{as:three} and \ref{as:four} for the N-S equations will differ from the finite dimensional case we saw in \cref{sec:Finite}. In the case of N-S, we are able to use only the $ Q $ part of the error equation to derive the ``squeezing" property of Assumption~\ref{as:three}. This is due to the specific form that the observation operator \(P_{\lambda}\) takes, which means that the $ Q $ equation represents the higher modes, which are dissipated quicker, the larger the \(\lambda\). In the Lorenz models all modes are dissipated at the same rate so we cannot hope to adjust the operator \(P\) in order to obtain the same effect.

Following the notation of~\cite{Hayden2011}, let \(\Omega = [0,L] \times [0,L]\). The equations for the velocity field \(u\) and pressure \(p\) are given by
\begin{align}\label{eq:NS}
& \frac{\partial u}{\partial t} - \nu \Delta u + (u.\nabla)u +\nabla p = f,  \\ \nonumber
& \nabla \cdot u =0,  
\end{align}
where $ \nu $ is the kinematic viscosity and \(f\) the time independent body forcing. Let $ \mathbb{V} $ be the space of L-periodic trigonometric polynomials, with zero divergence and zero constant term. That is, \[\mathbb{V} = \{u: \mathbb{R}^2 \rightarrow \mathbb{R}^2; \text{L-periodic trig. polynomial}, \nabla . u= 0, \int_{\Omega} u = 0 \}, \] and let \(H\) be the closure of \(\mathbb{V}\) in \(L^2(\Omega)\) and \(V\) the closure of \(\mathbb{V}\) in Sobolev space \(H^1\). Let \(v \in \mathbb{V}\) and let \(u \in V\) be a solution to Equation~\cref{eq:NS}. Take the \(L^2\) inner product of~\cref{eq:NS} with \(v\) to get \[(\frac{\partial u}{\partial t},v) - \nu(\Delta u, v) + (u\cdot\nabla u,v) + (\nabla p,v) = (f,v).\] Since \(v\) is divergence-free we obtain for the pressure term \[(\nabla p,v) = \int_{\Omega} \nabla p \cdot v = -\int_{\Omega} p \nabla \cdot v=0,\] where we also use that \(v\) is periodic. By density of \(\mathbb{V} \in H^1\), the weak form
       \begin{equation}\label{eq:NS2}
       \frac{du}{dt} + \nu A u + B(u,u) = f
       \end{equation}
of the N-S equations holds for all \(v \in V\). Equation~\cref{eq:NS2} is an ODE in the dual space \(V^*\), so that $ A $ and $ B $ are operators from \(V\) to \(V^*\). If \(u \in H^2\) then \((Au,v) = \int_{\Omega} -\Delta u \cdot v \ dx \) and \((B(u,u),v) = \int_{\Omega} (u\cdot\nabla u)\cdot v \ dx \).

We can express \(u \in H\) by its Fourier series
\[ u= \sum_{\bar{k} \in \mathscr{J}} u_{\bar{k}} e^{i\bar{k}.x},\]
where \[ \mathscr{J} = \Big\{ \bar{k}=\frac{2 \pi}{L}(k_1, k_2): k_i \in \mathbb{Z}, \bar{k} \neq 0 \Big\}. \]      
We define norms on \(H, V\) and \(H^2 \cap H\) respectively as
\[|u|^2 = L^2\sum_{\bar{k} \in \mathscr{J}}|u_{\bar{k}}|^2,\]
\[\|u\|^2 = L^2\sum_{\bar{k} \in \mathscr{J}}\bar{k}^2|u_{\bar{k}}|^2,\] and 
\[|Au|^2 = L^2\sum_{\bar{k} \in \mathscr{J}}\bar{k}^4|u_{\bar{k}}|^2,\]

which can be shown to be equivalent to the standard norms on \(L^2, H^1\) and \(H^2\) on these spaces.

The key idea of the approach taken in~\cite{Hayden2011}, and which we follow, is that there is a natural splitting of the phase space \(V\) into a finite-dimensional sub-space and its infinite dimensional orthogonal complement such that the orthogonal projection of the solution onto the finite dimensional subspace dominates.

We define the orthogonal projection $P_\lambda$ as
\[P_\lambda u = \sum_{|\bar{k}|^2 \leq \lambda} u_{\bar{k}} e^{i\bar{k}.x}, \]
where $0 < \lambda \in \mathbb{Z}$. We say that \(P_\lambda\) is a projection onto the low modes.

Let us state some well known properties of the system. In this setting and with initial conditions in \(V\), the existence and uniqueness of strong solutions is shown for example in~\cite{Robinson2001}. Therefore we can define a semi-flow. We will verify Assumption~\ref{as:one} and~\ref{as:two} for Equation~\cref{eq:NS2} by the following Theorem which is proved in~\cite{Jones1992}.\\
\begin{theorem}
 Let \(u(t)\) solve the N-S Equation~\cref{eq:NS2} and \(u_0 \in H\), then the following estimate holds
\begin{equation} \label{eq:NS_disspi}
\|u(t)\|^2 \leq e^{-\nu\lambda_1(t-s)}\|u(s)\|^2 + \frac{1}{\nu}\int_{s}^{t}e^{-\nu\lambda_1(t-\tau)}|f|^2 \ d\tau
\end{equation}
for every \(0 <s\leq t\), where \(\lambda_1\) is the smallest eigenvector of \(A\). In particular, we have
\begin{equation}\label{eq:NS_attractor}
\limsup_{t \to \infty} \|u(t)\|^2 \leq \frac{|f|^2}{\nu^2\lambda_1}:=K.
\end{equation}
\end{theorem}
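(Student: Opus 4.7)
The plan is to establish the differential inequality
\[
\frac{d}{dt}\|u\|^2 + \nu \lambda_1 \|u\|^2 \leq \frac{1}{\nu}|f|^2,
\]
for a.e.\ $t > 0$, and then conclude by Gronwall's inequality. The cleanest way to obtain such an inequality is to test the evolution equation~\eqref{eq:NS2} with $Au$ rather than $u$, since the $H^1$ norm satisfies $\|u\|^2 = (u,Au)$ in the periodic setting.

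First I would take the $L^2$ pairing of~\eqref{eq:NS2} with $Au$, obtaining
\[
\frac{1}{2}\frac{d}{dt}\|u\|^2 + \nu |Au|^2 + (B(u,u), Au) = (f, Au).
\]
The crucial point is that in the 2D periodic setting the trilinear form satisfies the cancellation
\[
(B(u,u), Au) = 0,
\]
which is a well-known identity specific to two space dimensions with periodic boundary conditions (and divergence-free $u$). Assuming this, the nonlinear term drops out entirely, leaving a linear differential inequality for $\|u\|^2$. Handling the right-hand side is then routine: by Cauchy--Schwarz and Young's inequality,
\[
(f, Au) \leq |f||Au| \leq \frac{1}{2\nu}|f|^2 + \frac{\nu}{2}|Au|^2,
\]
so that after absorbing the $|Au|^2$ term into the left,
\[
\frac{d}{dt}\|u\|^2 + \nu |Au|^2 \leq \frac{1}{\nu}|f|^2.
\]
Next I would use the Poincar\'e-type inequality $|Au|^2 \geq \lambda_1 \|u\|^2$, which is immediate from the Fourier expansion and the definition of $\lambda_1$ as the smallest eigenvalue of $A$, to arrive at
\[
\frac{d}{dt}\|u\|^2 + \nu \lambda_1 \|u\|^2 \leq \frac{1}{\nu}|f|^2.
\]

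Multiplying by the integrating factor $e^{\nu\lambda_1 t}$ and integrating from $s$ to $t$ yields
\[
\|u(t)\|^2 \leq e^{-\nu \lambda_1(t-s)}\|u(s)\|^2 + \frac{1}{\nu}\int_s^t e^{-\nu \lambda_1(t-\tau)}|f|^2 \, d\tau,
\]
which is the claimed estimate~\eqref{eq:NS_disspi}. Evaluating the integral with $|f|$ time-independent gives $\frac{|f|^2}{\nu^2\lambda_1}\bigl(1 - e^{-\nu\lambda_1(t-s)}\bigr)$, so sending $t\to\infty$ (with any fixed $s$) yields~\eqref{eq:NS_attractor}. The requirement $s > 0$ is important because for $u_0 \in H$ the quantity $\|u(s)\|$ need not be finite at $s=0$, but parabolic smoothing ensures $u(s) \in V$ for any $s>0$.

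The main obstacle, if one is being fully rigorous, is justifying the computation above for a weak solution: testing with $Au$ is only formally admissible when $u \in D(A)$. The standard remedy is to run the calculation on Galerkin approximants $u_N$, where the cancellation $(B(u_N,u_N), Au_N) = 0$ and all manipulations are valid in finite dimensions, then pass to the limit using the lower semicontinuity of $\|\cdot\|$ and dominated convergence on the integral term. Since this is entirely classical and the estimate~\eqref{eq:NS_disspi} is quoted from~\cite{Jones1992}, I would state the inequality, indicate the two key ingredients (the 2D periodic cancellation of $(B(u,u), Au)$ and Poincar\'e's inequality), and refer to the cited work for the regularization argument.
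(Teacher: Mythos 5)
Your argument is correct: the paper gives no proof of this estimate, deferring entirely to~\cite{Jones1992}, and what you have written is precisely the standard enstrophy estimate underlying that reference (multiply by $Au$, use the 2D-periodic cancellation $(B(u,u),Au)=0$, Young's inequality, the Poincar\'e bound $|Au|^2\geq\lambda_1\|u\|^2$, and Gronwall). Your added remarks on the Galerkin regularization and on why $s>0$ is needed for $u_0\in H$ are accurate and go slightly beyond what the paper records.
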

It follows from \cref{cor:apriori} that Assumption~\ref{as:two} is satisfied with constant \(c_1 = \nu \lambda_1\) and \(c_2 = K\).

It follows from Inequality~\cref{eq:NS_disspi} that the ball \(B(0,r)\) with \(r > K^{1/2}\) is an absorbing set because whatever bounded set we start with there will be a time after which it will be contained in the ball. Furthermore it's straightforward to show from~\cref{eq:NS_disspi} that \(B(0,r)\) is forward invariant, as required for Assumption~\ref{as:one}.

In the case where no noise is present in the observations, the existence of a function \(M\), as required for Assumption~\ref{as:three}, is shown in~\cite{Hayden2011}, Theorem 3.9. We follow the same reasoning but with the adjustment that in our setting \(P_\lambda \delta(t_n) \neq 0\), so that the induction argument used to ensure a bound on \(\|\delta_n\|^2\) is in our case impossible due to the noise term in the observation that can be arbitrarily large. Hence, we replace the \(R = \|\delta_0\|^2\) bound from~\cite{Hayden2011}, by an a priori bound from Assumption~\ref{as:two}. We conclude that whenever there exists a \(\rho>0\) such that \(\|\delta_n\|^2 < \rho\) we have for \(t \in [t_n,t_{n+1})\),
\[\|Q_\lambda \delta(t)\|^2 \leq M(t-t_n, \rho)\|\delta(t_n)\|^2, \] where
\[M(h) = e^{-\nu \lambda h} \Big(1+\int_{0}^{h}g(s, \rho) e^{\nu\lambda s} \ ds \Big)\] and \[g(s, \rho) = C_1 \lambda ^{1/4} e^{\kappa s}(\rho(h,\omega)^{1/2}e^{\kappa s /2}+ 2K^{1/2})^2 + C_2 e^{\kappa s}(\rho(h,\omega)^{1/2}e^{\kappa s /2}+ 2K^{1/2})^{8/3}, \] and where
\(C_1 = 2^{-1/4}\nu^{-1}\lambda_1^{-1/4}\), \(C_2 = 5^{5/3}2^{-22/3}3\nu^{-5/3}\lambda_1^{-1/3}\)\footnote{We have used the explicit value for the dimensionless constant \(c = 2^{-3/2}\) that appears in~\cite{Hayden2011}, Theorem 3.4.}. Further, \(K\) is the size of the attractor of the N-S dynamical system defined by Equation~\cref{eq:NS_attractor}. Finally, \(\kappa~=~2^{-1/3}(5/8)^{5/3}(3/8)\nu^{-5/3}\lambda_1^{-1/3}K^{4/3}\) is the constant as in~\cite{Hayden2011},~Theorem~3.8.

%Going back to the function \(M_k(h)\) and replacing the \(\|\delta(t_n)\|\) in the definition with the a priori bound \(\rho_n\) gives us a random variable on the space \(\Omega\); %

We want to use \cref{th:Main} to show that this random bound is sufficient to obtain convergence. Indeed, we can show that Assumption~\ref{as:four} holds.
\begin{theorem}
Suppose that $\mathbb{E}(|R_{0}|^{8/3})< \infty$, then for all \(h>0\), there exists a \(\lambda^* < \infty\) such that for all \(\lambda > \lambda^*\), Assumption~\ref{as:four} holds. That is, \(\mathbb{E}(M(h,\rho_0(h))) < 1\).
\end{theorem}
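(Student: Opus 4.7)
The strategy is to push the expectation inside the integral defining $M(h,\rho_0)$ by Fubini, and then to exploit the fact that the only $\lambda$-dependence inside $g(s,\rho)$ is the factor $\lambda^{1/4}$, while the outer factor $e^{-\nu\lambda h}$ kills the dominant exponential growth from the integral. First I would write
\[
\mathbb{E}(M(h,\rho_0)) \;=\; e^{-\nu\lambda h}\Bigl(1 + \int_0^h \mathbb{E}(g(s,\rho_0))\,e^{\nu\lambda s}\,ds\Bigr),
\]
which reduces everything to a uniform (in $s\in[0,h]$) estimate on $\mathbb{E}(g(s,\rho_0))$.

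Second, I would establish the moment bound $\mathbb{E}(\rho_0^{4/3})<\infty$. Inspecting Equation~\eqref{eq:apriori}, $\rho_0$ equals a deterministic constant plus $4\sigma^2\sum_{k\ge 0}e^{-c_1 kh}|R_{-k}|^2$. Normalising the weights, $p_k=(1-e^{-c_1 h})e^{-c_1 kh}$, defines a probability measure on $\mathbb{N}_0$, and Jensen's inequality applied to the convex function $x\mapsto x^{4/3}$ gives
\[
\Bigl(\sum_{k\ge 0} e^{-c_1 kh}|R_{-k}|^2\Bigr)^{4/3}
\;\le\; \frac{1}{(1-e^{-c_1 h})^{1/3}}\sum_{k\ge 0} e^{-c_1 kh}|R_{-k}|^{8/3}.
\]
Taking expectations and using stationarity of $R_n$ together with the hypothesis $\mathbb{E}(|R_0|^{8/3})<\infty$ yields $\mathbb{E}(\rho_0^{4/3})<\infty$. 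By Lyapunov's inequality all lower moments of $\rho_0$ are finite as well.

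Third, expand the polynomial factors in $g$ using $(a+b)^{8/3}\le 2^{5/3}(a^{8/3}+b^{8/3})$ and $(a+b)^2=a^2+2ab+b^2$, with $a=\rho_0^{1/2}e^{\kappa s/2}$ and $b=2K^{1/2}$. Each summand is bounded in expectation uniformly for $s\in[0,h]$ by a constant depending only on $h$, $K$, $\kappa$, $\mathbb{E}(\rho_0)$ and $\mathbb{E}(\rho_0^{4/3})$. Hence there exist constants $A_1,A_2$, independent of $\lambda$, with
\[
\mathbb{E}(g(s,\rho_0))\;\le\; A_1\lambda^{1/4}+A_2 \qquad \text{for every } s\in[0,h].
\]
Inserting this bound, using $\int_0^h e^{\nu\lambda s}ds=(e^{\nu\lambda h}-1)/(\nu\lambda)$, and simplifying gives
\[
\mathbb{E}(M(h,\rho_0))\;\le\; e^{-\nu\lambda h}+(A_1\lambda^{1/4}+A_2)\,\frac{1-e^{-\nu\lambda h}}{\nu\lambda}.
\]
The first term decays exponentially and the second like $\lambda^{-3/4}$ as $\lambda\to\infty$, so the right-hand side tends to $0$; in particular it is strictly less than $1$ for all $\lambda$ beyond some threshold $\lambda^*$, which is the claim.

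\textbf{Main obstacle.} The delicate point is the moment bound $\mathbb{E}(\rho_0^{4/3})<\infty$: a term-by-term application of $(a+b)^{4/3}\le 2^{1/3}(a^{4/3}+b^{4/3})$ would fail because $\rho_0$ is a sum of infinitely many terms. The Jensen/probability-measure trick on the geometric weights is what converts the assumed $8/3$-moment on the noise into a $4/3$-moment on $\rho_0$, and is essentially why the paper posits the exponent $8/3$ rather than $2$ on $|R_0|$. Everything else---Fubini, the elementary moment expansions, and the $\lambda\to\infty$ limit---is routine.
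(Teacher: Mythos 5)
Your proposal is correct and follows essentially the same route as the paper: Fubini/Tonelli to move the expectation inside the integral, a uniform-in-$s$ bound on the expected integrand whose only $\lambda$-dependence is the explicit $\lambda^{1/4}$ factor, reduction of everything to $\mathbb{E}(\rho_0(h)^{4/3})<\infty$ via the hypothesis $\mathbb{E}(|R_{0}|^{8/3})<\infty$, and the observation that $e^{-\nu\lambda h}+O(\lambda^{-3/4})<1$ for $\lambda$ beyond some finite threshold. The only divergence is in one technical step: where you control the infinite series in $\rho_0$ by Jensen's inequality with the normalized geometric weights, the paper applies Minkowski's inequality in $L^{4/3}$ (which, since $4/3\ge 1$, is countably subadditive, so the ``obstacle'' you describe does not arise on that route either); both devices are valid and yield the same conclusion.
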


\begin{proof}
By the previous discussion, we have that 
\[\mathbb{E}(M(h,\rho_0(h))) = e^{-\nu \lambda h} \Big(1+\int_{0}^{h}\bar{g}(s,\rho_0(h)) e^{\nu\lambda s} \ ds \Big),\]
where  
\begin{equation*}
\bar{g}(s,\rho_0(h)) := C_1 \lambda^{1/4}e^{\kappa s}\mathbb{E}(l(h,s)^2) + C_2 e^{\kappa s}\mathbb{E}(l(h,s)^{8/3}),
\end{equation*}
and where \(l(h,s):=\rho_0(h)^{1/2}e^{\kappa s/2}+ 2K^{1/2}\).

Note that \(\bar{g}(s,\rho_0(h)) \leq \bar{g}(h, \rho_0(h))\) for all \( s \leq h\). Then 
\begin{align*}
\mathbb{E}(M(h,\rho_0(h))) & \leq e^{-\nu\lambda h} \Big(1+ \bar{g}(h, \rho_0(h)) \int_{0}^{h} e^{\nu \lambda s} \ ds \Big) \\
& = e^{-\nu\lambda h} +  \frac{\bar{g}(h, \rho_0(h))}{\nu \lambda} \Big(1- e^{- \nu \lambda h}\Big).
\end{align*}
From the above it follows that  \(\mathbb{E}(M_n(h))<1\) if
\(-\nu\lambda + \bar{g}(h,\rho_0(h)) < 0\).  Using the definition of \(\bar{g}\), we get
\begin{equation}\label{eq:lambda2}
-\nu \lambda + C_1 \lambda ^{1/4} e^{\kappa h}\mathbb{E}(l^2) + C_2 e^{\kappa h}\mathbb{E}(l^{8/3}) < 0,
\end{equation}
where \(l:=l(h,h)\).

It is clear that Inequality~\cref{eq:lambda2} will hold for some sufficiently large \(\lambda\) if the second and third terms of~\cref{eq:lambda2} are finite. It is sufficient to show that \(\mathbb{E}(l^{8/3})\) is finite, since then, any lower moment is finite. 

\begin{comment}To see this, consider any real \(\alpha > \beta >0\) and positive random variable \(X\). We can write \(X^\beta \leq 1+ X^\alpha\) so that \(\mathbb{E}(X^\beta) \leq 1+ \mathbb{E}(X^\alpha)\) by linearity and monotonicity of the integral. Therefore, if \(\mathbb{E}(X^\alpha)\) is finite, so is \(\mathbb{E}(X^\beta)\).\footnote{We can show this using Jensen's inequality also, since \(\alpha/\beta >1\); \[\mathbb{E}(X^\beta)^{\alpha/\beta} \leq \mathbb{E}(X^\alpha) < \infty. \]}\end{comment}

Recall that \(l^{8/3}= (\rho_0(h)^{1/2}e^{\kappa h /2}+ 2K^{1/2})^{8/3} \). It is sufficient to show that \newline \(\mathbb{E}(\rho_0(h)^{4/3})<~\infty\) since
\begin{align*}
\mathbb{E}(l^{8/3}) & = \int (\rho_0(h)^{1/2}e^{\kappa h /2}+ 2K^{1/2})^{8/3} d\mathbb{P} \\
& = \| (\rho_0(h)^{1/2}e^{\kappa h /2}+ 2K^{1/2})\|_{8/3}^{8/3}  \\
& \leq \Big(e^{\kappa h /2}\|\rho_0(h)^{1/2}\|_{8/3} + 2K^{1/2}\Big)^{8/3},\ 
\end{align*}
where in the last step we applied the Minkowski inequality. 

It's clear that the right hand side of the above inequality is finite if \[\|\rho_0(h)^{1/2}\|_{8/3} =\|\rho_0(h)\|_{4/3}^{1/2} < \infty.\]

\begin{comment}
It's clear that the right hand side of the above inequality is finite if \[\|\rho_0(h)^{1/2}\|_{8/3} = \Big(\int \rho_0(h)^{4/3} d\mathbb{P} \Big)^{3/8} < \infty.\] which is true if 
\[\int \rho_0(h)^{4/3} d\mathbb{P} < \infty.  \]
\end{comment}
Using the Minkowski inequality on the a priori bound we get
\[\|\rho_0(h)\|_{4/3} \leq \bar{K}+ F(h)+ 4\sigma^2\sum_{k=0}^{\infty} e^{-\nu\lambda_1kh}\|R_{-k}^2\|_{4/3}, \] 
where \(\bar{K}\) and  \(F(h)\) are both deterministic and the right hand side is finite if \(h>0\) and \(\|R_{-k}^2\|_{4/3} < \infty\). 
\end{proof}

The above result does not hold uniformly for small \(h\) since the bound diverges at \(h=~0\).

In the previous theorem we saw that for any \(h>0\), there exists a finite \(\lambda\) which guarantees that \(\mathbb{E}(M(h,\rho_0(h))) < 1\). We can compute an explicit expression for a possible \(\lambda\) from Equation~\cref{eq:lambda2}, which is given in \cref{Lemma:appendix} in the Appendix.

\subsection*{Acknowledgements}
We would like to thank Peter Jan van Leeuwen, Andrew M. Stuart, Edriss S. Titi for fruitful discussions.

%\printbibliography
%\bibliographystyle{siamplain}
%\bibliography{All_papers}

\begin{thebibliography}{10}

\bibitem{Apte2008}
{\sc A.~Apte, C.~K. R.~T. Jones, A.~M. Stuart, and J.~Voss}, {\em Data
  assimilation: mathematical and statistical perspectives}, Internat. J. Numer.
  Methods Fluids, 56 (2008), pp.~1033--1046,
  \url{https://doi.org/10.1002/fld.1698}.

\bibitem{Bauer2015}
{\sc P.~Bauer, A.~Thorpe, and G.~Brunet}, {\em {The quiet revolution of
  numerical weather prediction}}, Nature, 525 (2015), pp.~47--55,
  \url{https://doi.org/10.1038/nature14956},
  \url{http://www.nature.com/doifinder/10.1038/nature14956}.

\bibitem{Breiman1992}
{\sc L.~Breiman}, {\em {Probability}}, Society for Industrial and Applied
  Mathematics, 1992.

\bibitem{Brett2013}
{\sc C.~E.~A. Brett, K.~F. Lam, K.~J.~H. Law, D.~S. McCormick, M.~R. Scott, and
  A.~M. Stuart}, {\em {Accuracy and stability of filters for dissipative
  PDEs}}, Phys. D Nonlinear Phenom., 245 (2013), pp.~34--45,
  \url{https://doi.org/10.1016/j.physd.2012.11.005},
  \url{http://dx.doi.org/10.1016/j.physd.2012.11.005},
  \url{https://arxiv.org/abs/1203.5845}.

\bibitem{Evensen}
{\sc G.~Evensen}, {\em Data assimilation:The ensemble Kalman filter},
  Springer-Verlag, Berlin, second~ed., 2009,
  \url{https://doi.org/10.1007/978-3-642-03711-5}.

\bibitem{FoiasJolly2001}
{\sc C.~Foias, M.~S. Jolly, I.~Kukavica, and E.~S. Titi}, {\em The {L}orenz
  equation as a metaphor for the {N}avier-{S}tokes equations}, Discrete Contin.
  Dynam. Systems, 7 (2001), pp.~403--429,
  \url{https://doi.org/10.3934/dcds.2001.7.403}.

\bibitem{Foias2016}
{\sc C.~Foias, C.~F. Mondaini, and E.~S. Titi}, {\em A discrete data
  assimilation scheme for the solutions of the two-dimensional
  {N}avier-{S}tokes equations and their statistics}, SIAM J. Appl. Dyn. Syst.,
  15 (2016), pp.~2109--2142, \url{https://doi.org/10.1137/16M1076526}.

\bibitem{Foias1967}
{\sc C.~Foias and G.~Prodi}, {\em {Sur le comportement global des solutions \\
  non-stationnaires des {\'{e}}quations de Navier- Stokes en dimension 2}},
  Rend. del Semin. Mat. della Univ. di Padova, 39 (1967), pp.~1--34.

\bibitem{Foias1984}
{\sc C.~Foias and R.~Temam}, {\em {Determination of the solutions of the
  Navier-Stokes equations by a set of nodal values}}, Math. Comput., 43 (1984),
  pp.~117--133, \url{https://doi.org/10.1090/S0025-5718-1984-0744927-9},
  \url{http://www.ams.org/jourcgi/jour-getitem?pii=S0025-5718-1984-0744927-9}.

\bibitem{Hayden2011}
{\sc K.~Hayden, E.~Olson, and E.~S. Titi}, {\em {Discrete data assimilation in
  the Lorenz and 2D Navier-Stokes equations}}, Phys. D Nonlinear Phenom., 240
  (2011), pp.~1416--1425, \url{https://doi.org/10.1016/j.physd.2011.04.021},
  \url{http://dx.doi.org/10.1016/j.physd.2011.04.021},
  \url{https://arxiv.org/abs/1010.6105}.

\bibitem{Jones1992}
{\sc D.~A. Jones and E.~S. Titi}, {\em {On the number of determining nodes for
  the 2D Navier-Stokes equations}}, J. Math. Anal. Appl., 168 (1992),
  pp.~72--88, \url{https://doi.org/10.1016/0022-247X(92)90190-O}.

\bibitem{Kalnay2003}
{\sc E.~Kalnay}, {\em {Atmospheric Modeling, Data Assimilation and
  Predictability}}, Cambridge University Press, 2003,
  \url{https://books.google.com/books?id=Uqc7zC7NULMC{\&}pgis=1}.

\bibitem{Law2016}
{\sc K.~J.~H. Law, D.~Sanz-Alonso, A.~Shukla, and A.~M. Stuart}, {\em Filter
  accuracy for the {L}orenz 96 model: fixed versus adaptive observation
  operators}, Phys. D, 325 (2016), pp.~1--13,
  \url{https://doi.org/10.1016/j.physd.2015.12.008}.

\bibitem{Law2014}
{\sc K.~J.~H. Law, A.~Shukla, and A.~M. Stuart}, {\em {Analysis of the 3DVAR
  filter for the partially observed Lorenz'63 model}}, Discret. Contin. Dyn.
  Syst. Ser. A, 34 (2014), pp.~1061--1078,
  \url{https://doi.org/10.3934/dcds.2014.34.1061},
  \url{https://arxiv.org/abs/1212.4923}.

\bibitem{Law2012}
{\sc K.~J.~H. Law and A.~M. Stuart}, {\em Evaluating data assimilation
  algorithms}, Internat. J. Numer. Methods Fluids, 140 (2012), pp.~3757--3782.

\bibitem{Law2015}
{\sc K.~J.~H. Law, A.~M. Stuart, and K.~Zygalakis}, {\em Data assimilation:A
  mathematical introduction}, vol.~62 of Texts in Applied Mathematics,
  Springer, Cham, 2015, \url{https://doi.org/10.1007/978-3-319-20325-6}.

\bibitem{Lorenz1963}
{\sc E.~N. Lorenz}, {\em {Deterministic Nonperiodic Flow}}, J. Atmos. Sci., 20
  (1963), pp.~130--141,
  \url{https://doi.org/10.1175/1520-0469(1963)020<0130:DNF>2.0.CO;2},
  \url{http://journals.ametsoc.org/doi/abs/10.1175/1520-0469(1963)020<0130:DNF>2.0.CO;2}.

\bibitem{Lorenz1996}
{\sc E.~N. Lorenz}, {\em {The Slow Manifold -- What Is It?}}, Proc. Seminar on
  Predictability, 1 (1996), pp.~1--18.

\bibitem{Olson2003}
{\sc E.~Olson and E.~S. Titi}, {\em {Determining modes for continuous data
  assimilation in 2D Turbulence}}, J. Statist. Phys., 113 (2003), pp.~799--840,
  \url{https://doi.org/10.1023/A:1027312703252}.

\bibitem{Robinson2001}
{\sc J.~C. Robinson}, {\em Infinite-dimensional dynamical systems}, Cambridge
  Texts in Applied Mathematics, Cambridge University Press, Cambridge, 2001,
  \url{https://doi.org/10.1007/978-94-010-0732-0}.
\newblock An introduction to dissipative parabolic PDEs and the theory of
  global attractors.

\bibitem{Sanz-Alonso2014}
{\sc D.~Sanz-Alonso and A.~M. Stuart}, {\em Long-time asymptotics of the
  filtering distribution for partially observed chaotic dynamical systems},
  SIAM/ASA J. Uncertain. Quantif., 3 (2015), pp.~1200--1220,
  \url{https://doi.org/10.1137/140997336}.

\bibitem{Stuart2010}
{\sc A.~M. Stuart}, {\em Inverse problems: a {B}ayesian perspective}, Acta
  Numer., 19 (2010), pp.~451--559,
  \url{https://doi.org/10.1017/S0962492910000061}.

\bibitem{TemamBook}
{\sc R.~Temam}, {\em Infinite-dimensional dynamical systems in mechanics and
  physics}, vol.~68 of Applied Mathematical Sciences, Springer-Verlag, New
  York, second~ed., 1997, \url{https://doi.org/10.1007/978-1-4612-0645-3}.

\bibitem{Vallis}
{\sc G.~K. Vallis}, {\em Atmospheric and Oceanic Fluid Dynamics: Fundamentals
  and Large-scale Circulation}, Cambridge University Press, 2006.

\bibitem{VanLeeuwen2009}
{\sc P.~J. van Leeuwen}, {\em {Particle Filtering in Geophysical Systems}},
  Mon. \\ Weather Rev., 137 (2009), pp.~4089--4114,
  \url{https://doi.org/10.1175/2009MWR2835.1},
  \url{http://journals.ametsoc.org/doi/abs/10.1175/2009MWR2835.1}.

\bibitem{VanLeeuwen2015}
{\sc P.~J. Van~Leeuwen}, {\em {Aspects of particle filtering in
  high-dimensional spaces}}, in Lect. Notes Comput. Sci. (including Subser.
  Lect. Notes Artif. Intell. Lect. Notes Bioinformatics), vol.~8964, 2015,
  pp.~251--262, \url{https://doi.org/10.1007/978-3-319-25138-7{\_}23}.

\bibitem{Walters}
{\sc P.~Walters}, {\em An introduction to ergodic theory}, Graduate Texts in
  Mathematics, Springer-Verlag, New York-Berlin, 1982.

\end{thebibliography}

\section{Appendix}
\begin{lemma} \label{Lemma:appendix}
Equation~\cref{eq:lambda2} holds for all 
\begin{equation}\label{eq:lambda}
\lambda \geq \max\Big(2^{-1}e^{4/3\kappa h}\mathbb{E}(l^2)^{4/3},5^{5/3}2^{-19/3}3e^{\kappa h}\mathbb{E}(l^{8/3})\Big)\lambda_1^{-1/3}\nu^{-8/3}
\end{equation}
\end{lemma}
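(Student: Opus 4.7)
The aim is to find a lower bound on $\lambda$ that forces the scalar inequality \cref{eq:lambda2}, namely $-\nu\lambda + C_1\lambda^{1/4}e^{\kappa h}\mathbb{E}(l^2) + C_2 e^{\kappa h}\mathbb{E}(l^{8/3}) < 0$, to hold. Because the dominant term $-\nu\lambda$ is linear in $\lambda$ while the two positive contributions grow strictly slower (as $\lambda^{1/4}$ and $\lambda^0$ respectively), such a $\lambda$ must exist; the task is to make the threshold explicit. The natural strategy is the standard splitting trick: write $\nu\lambda = \alpha\nu\lambda + \beta\nu\lambda$ for positive weights $\alpha,\beta$ with $\alpha+\beta\leq 1$, and then demand that each positive term is individually dominated, i.e.\ $C_1\lambda^{1/4}e^{\kappa h}\mathbb{E}(l^2)\leq \alpha\nu\lambda$ and $C_2 e^{\kappa h}\mathbb{E}(l^{8/3})\leq \beta\nu\lambda$. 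Adding these two sufficient conditions immediately delivers \cref{eq:lambda2}.

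I would then solve each sub-inequality for $\lambda$ separately. The first, after isolating the $\lambda^{1/4}$ factor and raising to the $4/3$-power, yields a bound of the form $\lambda \geq (C_1/(\alpha\nu))^{4/3}\, e^{4\kappa h/3}\mathbb{E}(l^2)^{4/3}$; the second rearranges directly to $\lambda \geq C_2 e^{\kappa h}\mathbb{E}(l^{8/3})/(\beta\nu)$. Requiring $\lambda$ to exceed both bounds simultaneously is precisely the max appearing in \cref{eq:lambda}.

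What remains is a plug-in calculation. Substituting $C_1 = 2^{-1/4}\nu^{-1}\lambda_1^{-1/4}$ and $C_2 = 5^{5/3}2^{-22/3}\cdot 3 \cdot \nu^{-5/3}\lambda_1^{-1/3}$, I collect exponents: the factor $(C_1/(\alpha\nu))^{4/3}$ contributes $\nu^{-8/3}\lambda_1^{-1/3}$ times a power of $2$ and $\alpha^{-4/3}$, while $C_2/(\beta\nu)$ also produces the combination $\nu^{-8/3}\lambda_1^{-1/3}$ (the common scaling is forced by dimensional consistency, which is a useful internal check). Matching the resulting numerical prefactors against the two entries inside the max in \cref{eq:lambda} then fixes the admissible choice of the splitting weights $\alpha,\beta$.

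The main ``obstacle'' is not analytic but bookkeeping: careful tracking of powers of $2$, $\nu$ and $\lambda_1$ through the $(\cdot)^{4/3}$ operation on the first sub-inequality. No further assumptions on $l$ or on $\mathbb{E}(l^{8/3})$ are needed beyond their finiteness (already secured by the hypothesis $\mathbb{E}(|R_0|^{8/3})<\infty$ together with the a priori bound), so the conclusion follows as soon as the constants line up.
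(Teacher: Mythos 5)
Your strategy is sound but genuinely different from the paper's. You split $\nu\lambda$ into shares $\alpha\nu\lambda+\beta\nu\lambda$ and absorb each sublinear term separately; the paper instead performs a case analysis on whether the $\lambda^{1/4}$ term or the constant term of \cref{eq:lambda2} dominates, bounds their sum by twice the dominant term in each case (producing the thresholds $M_2$ and $M_3$), and then spends the second half of the proof verifying that the case boundaries are mutually consistent so that $\lambda>\max(M_2,M_3)$ suffices in all cases. Your route reaches the max directly and dispenses with that consistency check, which is a real simplification; with $\beta=1/2$ your second sub-inequality reproduces the paper's second entry exactly, since $2C_2/\nu=5^{5/3}2^{-19/3}3\,\nu^{-8/3}\lambda_1^{-1/3}$.

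The difficulty is in the final ``matching'' step, which you defer to bookkeeping: it cannot be completed as stated. With $C_1=2^{-1/4}\nu^{-1}\lambda_1^{-1/4}$, your first sub-inequality yields $\lambda\geq 2^{-1/3}\alpha^{-4/3}\nu^{-8/3}\lambda_1^{-1/3}e^{4\kappa h/3}\mathbb{E}(l^2)^{4/3}$. Matching the stated prefactor $2^{-1}$ requires $\alpha^{-4/3}=2^{-2/3}$, i.e.\ $\alpha=\sqrt{2}>1$, which is inadmissible; $\alpha=1/2$ gives prefactor $2$, and even the degenerate limit $\alpha\to 1$ only gives $2^{-1/3}$. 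So your argument proves the lemma with the first entry of the max enlarged by a factor of $4$ relative to \cref{eq:lambda}, not with the constant claimed. The discrepancy traces to the paper's own derivation: in the case where the $\lambda^{1/4}$ term dominates, doubling it should produce the coefficient $2C_1=2^{3/4}\nu^{-1}\lambda_1^{-1/4}$, whereas the proof writes $2^{-3/4}$, and it is this coefficient that yields $M_2=2^{-1}\nu^{-8/3}\lambda_1^{-1/3}e^{4\kappa h/3}\mathbb{E}(l^2)^{4/3}$. Your bookkeeping is arguably the correct one, but you should either state the lemma with the weaker (larger) threshold your method actually delivers, or flag explicitly that the constant $2^{-1}$ in \cref{eq:lambda} is not attainable by any admissible splitting.
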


\begin{proof}
 We consider two possible cases of the second term of Inequality~\cref{eq:lambda2} being greater or smaller than the third term, which correspond to \(\lambda\) being greater or smaller than the expression
\begin{equation}\label{eq:tag1}
 \Big(\frac{\mathbb{E}(l^{8/3})}{\mathbb{E}(l^2)}\Big)^4(5^{5/3}2^{-19/12}3)^4\lambda_1^{-1/3}\nu^{-8/3}:=M_1.
\end{equation}
Replacing in Inequality~\cref{eq:lambda2}, we have that for \(\lambda \) greater  than or equal to~\cref{eq:tag1}, if \(\lambda\) holds for below equation then it holds for~\cref{eq:lambda2} as well;
\[-\nu \lambda + 2^{-3/4}\nu^{-1}\lambda_1^{-1/4}  \lambda ^{1/4} e^{\kappa h}\mathbb{E}(l^2) < 0,\] so that
\[\lambda >  2^{-1}\nu^{-8/3}\lambda_1^{-1/3}  e^{4/3\kappa h}\mathbb{E}(l^2)^{4/3}:=M_2,\] 
and hence
\begin{equation}\label{eq:lambda3}
\lambda >  \max\Big(M_1,M_2\Big).
\end{equation}
On the other hand, if 
\( \lambda \) is less than expression~\cref{eq:tag1}, we can replace Inequality~\cref{eq:lambda2} with
\[-\nu \lambda + 5^{5/3}2^{-19/3}3\nu^{-5/3}\lambda_1^{-1/3} e^{\kappa h}\mathbb{E}(l^{8/3}) < 0,\] so that
\[\lambda >  5^{5/3}2^{-19/3}3\lambda_1^{-1/3}\nu^{-8/3}e^{\kappa h}\mathbb{E}(l^{8/3}):=M_3,\] and hence
\begin{equation}\label{eq:lambda4}
M_3< \lambda <M_1. 
\end{equation}

There are solutions for \(\lambda\) in Inequality~\cref{eq:lambda4} if and only if \[e^{\kappa h} < \frac{\mathbb{E}(l^{8/3})^3}{\mathbb{E}(l^2)^4}5^52^{-16}3^3,\] so that
\[e^{4/3\kappa h} <\frac{\mathbb{E}(l^{8/3})^4}{\mathbb{E}(l^2)^{16/3}}(5^52^{-16}3^3)^{4/3}.\] 
Multiplying both sides by \(2^{-1}\nu^{-8/3}\lambda_1^{-1/3}               \mathbb{E}(l^2)^{4/3}\) we get precisely that
\[M_2 < M_1.\]

Conversely, when \(M3 > M_1\), we have that \(M_2>M_1\), which means that Inequality~\cref{eq:lambda3} becomes
\begin{equation}\label{eq:lambda5}
\lambda > M_2.
\end{equation}
Putting Inequalities~\cref{eq:lambda4} and~\cref{eq:lambda5} together, we see that we require that
\[\lambda > \max \Big(M_2, M_3\Big).\]
\end{proof}
\end{document}